\newtheorem{theorem}{Theorem}
\newcommand{\tabincell}[2]{
\begin{tabular}{@{}#1@{}}#2\end{tabular}
}
\title{Revisiting the Information Capacity of Neural Network Watermarks: \\ Upper Bound Estimation and Beyond}
\author{
Fangqi Li, Haodong Zhao, Wei Du, Shilin Wang\thanks{Shilin Wang is the corresponding author.} \\
}
\begin{document}
%
\maketitle
\begin{abstract}
To trace the copyright of deep neural networks, an owner can embed its identity information into its model as a watermark.
The capacity of the watermark quantify the maximal volume of information that can be verified from the watermarked model.
Current studies on capacity focus on the ownership verification accuracy under ordinary removal attacks and fail to capture the relationship between robustness and fidelity.
This paper studies the capacity of deep neural network watermarks from an information theoretical perspective.
We propose a new definition of deep neural network watermark capacity analogous to channel capacity, analyze its properties, and design an algorithm that yields a tight estimation of its upper bound under adversarial overwriting.
We also propose a universal non-invasive method to secure the transmission of the identity message beyond capacity by multiple rounds of ownership verification. 
Our observations provide evidence for neural network owners and defenders that are curious about the tradeoff between the integrity of their ownership and the performance degradation of their products.
\end{abstract}


\section{Introduction}
\label{sec:1}
The intellectual property protection of artificial intelligence models, especially deep neural networks (DNN), is drawing increasing attention since the expense of building large models has become prohibitively high.
For example, the training of GPT-4 involves over 24,000 graphic processing units and more than 45TB manually proofread data~\cite{liu2023summary}.
If internal enemies steal and distribute the model, watermarking schemes continue to safeguard the intellectual property.

As demonstrated in Fig.~\ref{figure:1}, a DNN watermarking scheme adds the owner's identity message into the model to be protected.
Once the model is stolen, the owner can claim its copyright publicly by requesting a third-party judge to verify the identity message hidden in the victim model~\cite{main:7}.
Watermarking schemes have been designed for various kinds of DNN models including image classifier~\cite{main:8}, image generator~\cite{main:27}, pretrained natural language encoder~\cite{plmmark}, graph neural networks~\cite{main:29}, etc.

Most discussions on the applicability of DNN watermarking schemes focus on unambiguity~\cite{main:56} and robustness~\cite{main:5}.
It has been proven that incorporating pseudorandomness and connecting the triggers into a chain~\cite{main:6} result in security against ambiguity attacks, i.e., an unauthorized party cannot pretend itself as the owner of an arbitrary DNN model.
A robust ownership proof remains valid even if the watermarked DNN model undertakes adversarial modifications.
There have been many methods to foster robustness including adding regularizer~\cite{main:23}, training with surrogate models~\cite{main:32}, using error correction code for the identity message~\cite{main:44}, etc.

\begin{figure}[!t]
\centering
\includegraphics[width=7.2cm]{./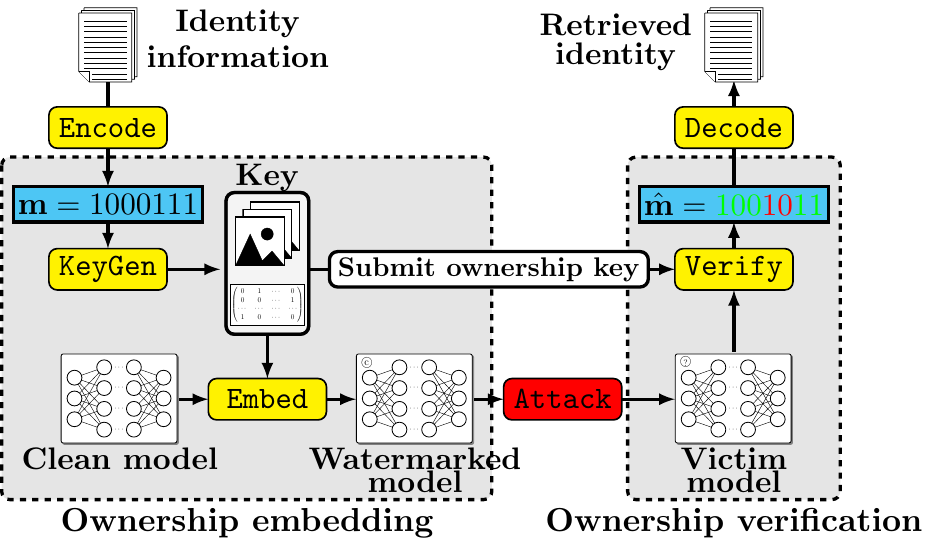}
\caption{The workflow of a DNN watermarking scheme.}
\label{figure:1}
\end{figure}

Nonetheless, the information capacity of DNN watermark, i.e., the maximal number of bits that can be accurately transmitted, has not undergone sufficient studies. 
So far, there is no uniform definition of capacity, especially for schemes that rely on backdoors.
As a result, a fair comparison between different schemes regarding capacity is intractable.
Moreover, the overlook of information theory-based capacity makes it hard to determine the configurations of DNN watermarking schemes with regard to the identity information and the expense of copyright protection. 

To tangle with these deficiencies, we revisit the capacity of the DNN watermark.
The contributions of this paper are:
\begin{itemize}
\item We give the first information theory-based definition of DNN watermark capacity and demonstrate how the expense of copyright protection is measured in the degradation of the watermarked model's performance.
\item We design a capacity estimation algorithm that yields a tight upper bound of the capacity of DNN watermarks under adversarial overwriting.
\item We propose a variational approximation-based method to increase the accuracy of identity message transmission beyond the capacity by multiple rounds of ownership verification. 
It can be generalized to arbitrary DNN watermarking schemes in a non-invasive manner.
\end{itemize}

\section{Preliminaries}
\label{sec:2}
\subsection{DNN watermark}
A DNN watermarking scheme adds the owner's identity information, encoded as a binary string $\mathbf{m}$ with length $L$, to a clean DNN model $M_{\text{clean}}$ to produce a watermarked model $M_{\text{WM}}$.
The owner first generates an ownership key $K$ as the intermedium. 
\begin{equation}
\label{equation:1}
K\!\leftarrow\!\texttt{KeyGen}(L).
\end{equation}
The owner then tunes $M_{\text{clean}}$ into $M_{\text{WM}}$ with the original training loss $\mathcal{L}_{0}$ and an additional regularizer $\mathcal{L}_{\text{WM}}$ as watermark embedding.
This step minimizes:
\begin{equation}
\label{equation:2}
\mathcal{L}_{0}(M_{\text{WM}})\!+\!\lambda\!\cdot\!\mathcal{L}_{\text{WM}}(M_{\text{WM}},K,\mathbf{m}).
\end{equation}
Upon piracy, the owner submits $K$ and requests a judge to retrieve the ownership information from a suspicious victim model $M$ with a verifier  that usually takes the form:
\begin{equation}
\label{equation:3}
\texttt{Verify}(M,K)=\arg\min_{\mathbf{m}'}\left\{ \mathcal{L}_{\text{WM}}(M,K,\mathbf{m}') \right\}.
\end{equation}
It is expected that if $M$ is a copy or a slightly modified version of $M_{\text{WM}}$ then $\texttt{Verify}(M,K)\!=\!\mathbf{m}$.
A complete DNN watermarking scheme is featured by Eq.~\eqref{equation:1}\eqref{equation:2}\eqref{equation:3}.

In scenarios where the judge has access to parameters in the victim model, the ownership key is usually defined as a pseudorandomly generated matrix and a bias term $K\!=\!(\mathbf{X},\mathbf{b})$.
The watermark embedding regularizer for white-box DNN watermarking schemes subject to this assumption is
\begin{equation}
\label{equation:4}
\mathcal{L}_{\text{WM}}(M_{\text{WM}},K,\mathbf{m})\!=\!f(\sigma(\mathbf{X}\!\cdot\!\mathbf{W}\!+\!\mathbf{b}),\mathbf{m}),
\end{equation}
where $\mathbf{W}$ denotes certain parameters in $M_{\text{WM}}$~\cite{main:38} or the outputs of $M_{\text{WM}}$'s intermediate neurons~\cite{main:46}.
Considering $\mathbf{m}$ as a list of binary labels, $f$ can be cross-entropy loss~\cite{main:38}, hingle loss~\cite{main:56}, or a neural network classification backend~\cite{main:45}.

If the judge has only black-box access to the victim model, the ownership key is a set of triggers $K\!=\!\left\{\mathbf{t}_{n} \right\}$. 
The embedding loss takes the form:
\begin{equation}
\label{equation:5}
\mathcal{L}_{\text{WM}}\left(M_{\text{WM}},K,\mathbf{m}\right)\!=\!f\left(g\left(\left\{M_{\text{WM}}(\mathbf{t}_{n}) \right\}\right),\mathbf{m}\right),
\end{equation}
in which $g$ interprets the outputs of the triggers into a list of bits.
For example, for an eight-class classification model, a vanilla interpreter maps the prediction of each trigger into three bits. 
So $\mathbf{m}$ is compactly represented by the labels of $\nicefrac{L}{3}$ triggers and watermark embedding is equivalent to fine-tuning to fit the triggers.
In cases of image generators, the interpreters are complex decoders as in steganography~\cite{zj}.

In both white-box and black-box cases, it is necessary that the gradient of the model's parameters w.r.t. Eq.~\eqref{equation:4}\eqref{equation:5} is computable so the watermark embedding process becomes an optimization task.

\subsection{Information capacity of watermark}
Capacity is a fundamental property of watermark along with fidelity and robustness~\cite{lei2019multipurpose}.
It regulates the upstream identity information encoder since a message carrying more information than the capacity cannot be transmitted losslessly.
As an example, in the scope of digital image watermarking, the capacity of a pixel is subject to both the pixel's contribution to the overall visual quality and its stability under standard image processing schemes such as compression or obfuscation~\cite{moulin2001role}.
Embedding too many bits into one pixel results in visible distortion but embedding too few bits might fail to deliver the message.


\subsection{Related works}

Li et al. derived an upper bound of DNN watermark capacity by examining if parameters where the watermarks are embedded have collusions or not~\cite{main:68}.
Many established works treat the length of the secret message as an upper bound of the capacity~\cite{main:42}.
In black-box schemes, the capacity is straightforward measured by the number of triggers~\cite{main:19}.
Unlike in traditional watermark, these definitions overlook the correlation between the damage of watermarking to the performance of the model and the influence of adversarial modifications.

\section{Information Capacity of DNN Watermark}
\label{sec:3}
\subsection{Definition}
We consider DNN watermark as a channel from the identity message $\mathbf{m}$ to the judge's observation $\hat{\mathbf{m}}$, both with length $L$.
The adversarial modication $M_{\text{WM}}\!\rightarrow\! M_{\text{WM}}\!+\!\theta$ is featured by the parameter deviation $\theta$. 
Fixing the upper bound of the victim model's performance degradation by $\delta\!\geq\! 0$, the capacity of the watermark is the maximal volume of information that can be correctly transmitted through the channel when the model's performance declines by no more than $\delta$ after undertaking arbitrary attacks. 
This property can be characterized through the channel capacity:
\begin{equation}
\label{equation:6}
C(\delta,L)=\min_{\theta}\left\{\max_{p(\mathbf{m})} I(\mathbf{m};\hat{\mathbf{m}}) \right\},
\end{equation}
subject to:
\begin{equation}
\nonumber
\begin{aligned}
&\hat{\mathbf{m}}\!=\!\texttt{Verify}(M_{\text{WM}}\!+\!\theta,K),\\
&E(M_{\text{WM}}\!+\!\theta)\!\geq\! E(M_{\text{WM}})\!-\!\delta,
\end{aligned}
\end{equation}
in which $E(\cdot)$ is the performance evaluation metric of the DNN model, $p(\mathbf{m})$ is the distribution over $\mathbf{m}$, and $I(\cdot;\cdot)$ denotes mutual information.
The capacity of the DNN watermark satisfies the following properties.

\begin{theorem}
\label{theorem:1}
\textbf{(Monotonicity)}
$0\!\leq\! C(\delta,L)\!\leq\! L$.
$C(\delta,L)$ decreases in $\delta$.
$C(\delta,L)$ increases in $L$ if each bit of the identity message is independently embedded and retrieved.
\end{theorem}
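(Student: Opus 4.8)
The plan is to treat the three claims separately, since each rests on a different structural feature of the min--max in Eq.~\eqref{equation:6}. For the bounds $0 \leq C(\delta, L) \leq L$ I would argue pointwise in $\theta$. Mutual information is always non-negative, so $\max_{p(\mathbf{m})} I(\mathbf{m}; \hat{\mathbf{m}}) \geq 0$ for every feasible $\theta$, and the outer minimum inherits this. For the upper bound, since $\mathbf{m}$ is a binary string of length $L$ its entropy obeys $H(\mathbf{m}) \leq L$, and the standard inequality $I(\mathbf{m}; \hat{\mathbf{m}}) \leq H(\mathbf{m})$ gives $\max_{p(\mathbf{m})} I(\mathbf{m}; \hat{\mathbf{m}}) \leq L$ uniformly in $\theta$; taking the minimum over $\theta$ preserves the bound.

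For monotonicity in $\delta$, the key observation is that $\delta$ enters only through the feasibility constraint on the attack. Writing $\Theta(\delta) = \{\theta : E(M_{\text{WM}} + \theta) \geq E(M_{\text{WM}}) - \delta\}$, I would note that $\delta_1 \leq \delta_2$ implies $\Theta(\delta_1) \subseteq \Theta(\delta_2)$, since a larger tolerated degradation only enlarges the set of admissible perturbations. Because $C(\delta, L)$ minimizes the fixed objective $\theta \mapsto \max_{p(\mathbf{m})} I(\mathbf{m}; \hat{\mathbf{m}})$ over $\Theta(\delta)$, minimizing over a superset can only decrease the value, yielding $C(\delta_2, L) \leq C(\delta_1, L)$.

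For monotonicity in $L$, I would exploit the independence hypothesis to decompose the channel into per-bit subchannels. If bit $i$ is embedded and retrieved independently, then $p(\hat{\mathbf{m}} \mid \mathbf{m})$ factorizes coordinatewise and the capacity-achieving input is a product distribution, so at the optimum $I(\mathbf{m}; \hat{\mathbf{m}}) = \sum_{i=1}^{L} I(m_i; \hat{m}_i)$. I would then argue that the attacker's choice reduces to allocating its degradation budget across coordinates: an admissible $\theta$ corresponds to per-bit degradations $d_1, \ldots, d_L \geq 0$ with $\sum_i d_i \leq \delta$, and each subchannel contributes a per-bit capacity $c_i(d_i)$ that is non-negative and, by the $\delta$-monotonicity argument applied coordinatewise, non-increasing in $d_i$. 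Given this, the result follows from a restriction argument: take an optimal allocation $(d_1^*, \ldots, d_{L+1}^*)$ for the $(L+1)$-bit system; its first $L$ components satisfy $\sum_{i=1}^{L} d_i^* \leq \delta$ and are feasible for the $L$-bit system, so $\sum_{i=1}^{L} c_i(d_i^*) \geq C(\delta, L)$, whence $C(\delta, L+1) = \sum_{i=1}^{L} c_i(d_i^*) + c_{L+1}(d_{L+1}^*) \geq C(\delta, L)$ because the final term is non-negative.

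The main obstacle is therefore not this last inequality but the decomposition it relies on: turning the informal phrase ``independently embedded and retrieved'' into the precise statements that (i) the joint channel factorizes so that mutual information is additive at the capacity-achieving product input, and (ii) the single global constraint $E(M_{\text{WM}} + \theta) \geq E(M_{\text{WM}}) - \delta$ can be rewritten as a shared budget over per-coordinate degradations. Without (ii) one must rule out that the larger $(L+1)$-bit system lets the adversary reallocate its budget so as to damage the original $L$ coordinates more; the restriction argument above handles this cleanly once the budget form is available. I would accordingly make (i) and (ii) explicit as the operative reading of the independence hypothesis, after which the remaining steps are immediate.
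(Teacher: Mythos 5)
Your arguments for the range bound and for monotonicity in $\delta$ coincide with the paper's: the first is the pointwise bound $0\le I(\mathbf{m};\hat{\mathbf{m}})\le H(\mathbf{m})\le L$, and the second is exactly the nesting $\Theta(\delta_1)\subseteq\Theta(\delta_2)$ followed by minimization over a superset. The divergence is in the third part. The paper's proof is a direct restriction argument on the perturbation itself: let $\theta'$ attain the minimum in Eq.~\eqref{equation:6} for message length $L+1$, view the $(L+1)$-bit watermarked model as an $L$-bit one by reading off only the first $L$ bits at verification time, and observe that the same $\theta'$ is feasible for both readings, since the constraint $E(M_{\text{WM}}+\theta')\ge E(M_{\text{WM}})-\delta$ does not mention $L$. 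Writing $u(\theta,L)=\max_{p(\mathbf{m})}I(\mathbf{m};\hat{\mathbf{m}})$, this gives $C(\delta,L)\le u(\theta',L)\le u(\theta',L+1)=C(\delta,L+1)$, where the middle inequality uses only that the $(L+1)$-th independently embedded and retrieved bit contributes non-negative additional mutual information. Your version reaches the same conclusion but routes it through an extra hypothesis, your item (ii): that the single global performance constraint can be rewritten as an additive per-coordinate degradation budget $\sum_i d_i\le\delta$. That assumption does not follow from ``independently embedded and retrieved'' (which concerns the embedding and verification maps, not how performance loss distributes over bits), and the obstacle you flag --- ruling out that the adversary reallocates its budget to damage the first $L$ coordinates more --- dissolves once you notice that no decomposition of the constraint is needed: whatever damage $\theta'$ inflicts on the first $L$ bits, $\theta'$ itself is already a feasible attack on the $L$-bit problem, so the minimum defining $C(\delta,L)$ is at most $u(\theta',L)$. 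Your item (i), additivity of mutual information over independent subchannels at a product input, is the legitimate content of the independence hypothesis and is what both proofs use for the final inequality.
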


\begin{proof}
Denote $\Theta(\delta)\!=\!\left\{\theta\!:\! E(M_{\text{WM}}\!+\!\theta)\!\geq\! E(M_{\text{WM}})\!-\!\delta\right\}$ and $u(\theta,L)=\max_{p(\mathbf{m})}I(\mathbf{m};\hat{\mathbf{m}})$. 

As the mutual information between $L$ binary random variables, $0\!\leq\!I(\mathbf{m};\hat{\mathbf{m}})\!\leq\!L$, so does $C(\delta,L)$.

For the second statement, we prove that if $0\!\leq\!\delta_{1}\!\leq\!\delta_{2}$ then $C(\delta_{1},L)\!\geq\!C(\delta_{2},L)$.
Since $\delta_{1}\!\leq\! \delta_{2}$ implies $\Theta(\delta_{1})\!\subset\!\Theta(\delta_{2})$, it is evident that: 
\begin{equation}
\nonumber
C(\delta_{1},L)\!=\!\min_{\theta\in\Theta(\delta_{1})}\!\left\{u(\theta\!,L) \right\}\!\geq\! \min_{\theta\in\Theta(\delta_{2})}\!\left\{u(\theta\!,L) \right\}\!=\!C(\delta_{2},L).
\end{equation}

Finally, we prove $C(\delta,L\!+\!1)\!\geq\! C(\delta,L)$.
By definition, there is $\theta'$ such that the capacity is $C(\delta,L\!+\!1)$ when the length of the identity message is $L\!+\!1$ and $E(M_{\text{WM}}\!+\!\theta')\geq E(M_{\text{WM}})\!-\!\delta$.
The DNN model watermarked with a message of length $(L\!+\!1)$ can be viewed as a model watermarked with a message of length $L$ by considering only the first $L$ bits during ownership verification, i.e.,
\begin{equation}
\nonumber
C(\delta\!,\!L)=\min_{\theta\in\Theta(\delta)}\!\left\{u(\theta,\!L) \right\}\!\leq \!u(\theta'\!,\!L)\!\leq\! u(\theta'\!,\!L\!+\!1)\!=\!C(\delta,\!L\!+\!1),
\end{equation}
where the second inequality holds since the $(L\!+\!1)$-th independent bit brings additional mutual information.
\end{proof}

\begin{theorem}
\label{theorem:2}
\textbf{(BER upper bound)}
Denote the bit error rate corresponding to adversarial modification $\theta$ as:
\begin{equation}
\label{equation:7}
\epsilon(\theta)=\min\left\{\frac{1}{2},\frac{\|\mathbf{m}\oplus\texttt{Verify}(M_{\text{WM}}\!+\!\theta,\!K) \|_{0}}{L}\right\}.
\end{equation}
Denote $\epsilon_{\delta}=\max_{\theta\in\Theta(\delta)}\left\{\epsilon(\theta) \right\}$ as the maximal BER when the model's performance drops for no more than $\delta$.
The capacity of the DNN watermark is upper bounded by:
\begin{equation}
\label{equation:8}
C(\delta,L)\!\leq\! L\!\cdot\!(1\!-\!H(\epsilon_{\delta})),
\end{equation}
with $H(x)\!=\!-x\!\cdot\!\log_{2}\!x\!-\!(1\!-\!x)\!\cdot\!\log_{2}(1\!-\!x)$.
If each bit of the identity message is independently embedded and verified then Eq.~\eqref{equation:8} is an equality.
\end{theorem}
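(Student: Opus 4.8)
The plan is to recognize Eq.~\eqref{equation:6} as the capacity of a binary symmetric channel (BSC) and to exploit the outer minimization to single out the most damaging modification. First I would dispose of the $\min_{\theta}$: since $C(\delta,L)=\min_{\theta\in\Theta(\delta)}u(\theta,L)$ and $\epsilon_{\delta}=\max_{\theta\in\Theta(\delta)}\epsilon(\theta)$, I would pick $\theta^{\star}\in\Theta(\delta)$ attaining $\epsilon(\theta^{\star})=\epsilon_{\delta}$, so that $C(\delta,L)\leq u(\theta^{\star},L)$. It remains to check that this is indeed the capacity-minimizing modification: because the per-bit quantity $1-H(p)$ is strictly decreasing on $[0,\tfrac12]$ and $\epsilon(\theta)$ is capped at $\tfrac12$, pushing the bit error rate toward $\epsilon_{\delta}$ is exactly what shrinks the mutual information, so bounding $u(\theta^{\star},L)$ suffices.

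Next I would bound $u(\theta^{\star},L)=\max_{p(\mathbf{m})}I(\mathbf{m};\hat{\mathbf{m}})$ by the standard converse decomposition $I(\mathbf{m};\hat{\mathbf{m}})=H(\hat{\mathbf{m}})-H(\hat{\mathbf{m}}\mid\mathbf{m})$. The output side is handled by subadditivity: $H(\hat{\mathbf{m}})\leq\sum_{i=1}^{L}H(\hat{m}_{i})\leq L$, since $\hat{\mathbf{m}}$ carries $L$ bits. For the noise side I would model the worst-case overwriting as flipping each recovered bit with probability $\epsilon_{\delta}$, i.e.\ a per-coordinate BSC$(\epsilon_{\delta})$, which contributes $H(\hat{m}_{i}\mid m_{i})=H(\epsilon_{\delta})$ to the conditional entropy. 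Collecting the two sides yields $\max_{p(\mathbf{m})}I(\mathbf{m};\hat{\mathbf{m}})\leq L-L\cdot H(\epsilon_{\delta})=L(1-H(\epsilon_{\delta}))$, which is Eq.~\eqref{equation:8}.

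For the equality claim, I would invoke the hypothesis that each bit is embedded and verified independently: the length-$L$ channel then factorizes into $L$ parallel, independent BSC$(\epsilon_{\delta})$ channels. The capacity of parallel independent channels is additive, each summand equals $1-H(\epsilon_{\delta})$, and the product (uniform, independent) input distribution simultaneously maximizes every per-bit term while making the output-entropy bound tight, so that $H(\hat{\mathbf{m}})=L$ and $H(\hat{\mathbf{m}}\mid\mathbf{m})=L\cdot H(\epsilon_{\delta})$. Both inequalities of the previous paragraph then close and Eq.~\eqref{equation:8} holds with equality.

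The main obstacle lies on the noise side, in controlling $H(\hat{\mathbf{m}}\mid\mathbf{m})=H(\mathbf{m}\oplus\hat{\mathbf{m}}\mid\mathbf{m})$: only when the per-bit errors are genuinely independent does this equal $L\cdot H(\epsilon_{\delta})$, whereas correlated error patterns across coordinates would change it. The crux is therefore to justify the BSC modeling of the worst-case parameter deviation $\theta^{\star}$, and to argue that the independent-embedding hypothesis is precisely the regime in which the per-coordinate reduction is lossless, so that the generic estimate tightens to an equality. Reconciling the non-convex inner optimization over $\theta$ with this clean symmetric-channel picture is the step I expect to demand the most care.
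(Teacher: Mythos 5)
Your proposal is correct and follows essentially the same route as the paper's proof: both model the watermark as $L$ parallel binary symmetric channels with crossover probability $\epsilon_{\delta}$ (justified by the adversary attaining the maximal BER), bound the joint capacity by the sum of per-channel capacities via subadditivity, and obtain equality from the independence hypothesis. Your version merely spells out the converse more explicitly through the decomposition $I(\mathbf{m};\hat{\mathbf{m}})=H(\hat{\mathbf{m}})-H(\hat{\mathbf{m}}\mid\mathbf{m})$, and your closing remark correctly identifies the BSC modeling of $\theta^{\star}$ as the step the paper also takes on faith.
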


\begin{proof}
DNN watermark can be viewed as the combination of $L$ binary symmetric channels, where each channel's average capacity is $(1\!-\!H(\epsilon_{\delta}))$ since the adversary that minimizes the mutual information exerts the maximal BER. 
The capacity of $L$ paralleling channels is no larger than the summation of their capacities, this yields Eq.~\eqref{equation:8}.
If each bit of $\mathbf{m}$ is independent of each other then the capacity equals the summation of all independent channels.
\end{proof}

Assume the owner's identity information contains $J$ bits, whose source might be a digital copyright administrator authority as shown in Fig.~\ref{figure:1}.
To resist adversarial removals, the owner encodes its identity into $\mathbf{m}$ with length $L\!\geq\! J$ by injecting redundancy.
The cost in fidelity is measured by the performance degradation due to watermark embedding $F(L)\!=\!E(M_{\text{clean}})\!-\!E(M_{\text{WM}})$.
Intuitively, $F(L)$ increases in $L$.
On the other hand, the scheme's robustness is reflected by the performance degradation that the adversary has to undertake to sabotage the ownership, i.e., $\min_{\delta}\left\{\delta\!:\!C(\delta,L)\!\leq\! J \right\}$, which increases in $L$ due to Theorem~\ref{theorem:1}.

The owner's objective is to protect the intellectual property of any DNN model whose performance is comparable to its state-of-the-art model $M_{\text{clean}}$.
Models whose functionality has been severely damaged are not worth protecting.
Formally, the tradeoff between fidelity and robustness in the context of capacity is described in the following theorem.

\begin{figure}[!t]
\centering
\includegraphics[width=8cm]{./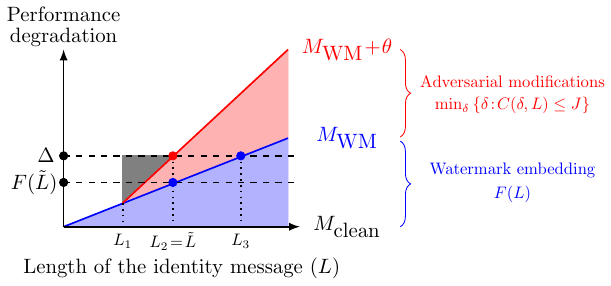}
\caption{The performance degradation v.s. the length of the identity message.
The blue area and the red area represent the cost in fidelity and robustness respectively.
}
\label{figure:2}
\end{figure}

\begin{theorem}
\label{theorem:3}
\textbf{(The minimal length of the identity message)}
Assume the owner's identity information contains $J$ bits and the owner wants to protect the copyright of all variants of $M_{\text{WM}}$ with performance degradation no more than $\Delta$ compared to $M_{\text{clean}}$. 
The necessarily minimal length of the identity message is:
\begin{equation}
\label{equation:9}
\tilde{L}\!=\!\min_{L}\left\{L\!:\!\left(F(L)\!+\!\min_{\delta}\left\{\delta\!:\!C(\delta,L)\!\leq\! J \right\}\right)\!\geq\! \Delta \right\}, 
\end{equation}
and the necessarily minimal expense of copyright protection measured in the model's performance degradation is $F(\tilde{L})$. 
\end{theorem}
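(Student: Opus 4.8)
The plan is to show that Eq.~\eqref{equation:9} is the exact threshold produced by the capacity definition once ``protecting all $\Delta$-close variants'' is rephrased as the absence of a budget-feasible watermark-removing attack. First I would introduce the protected family $\mathcal{P}(\Delta)\!=\!\{M_{\text{WM}}\!+\!\theta\!:\!E(M_{\text{clean}})\!-\!E(M_{\text{WM}}\!+\!\theta)\!\leq\!\Delta\}$ and note that the degradation from the \emph{clean} model telescopes exactly: for every deviation $\theta$, $E(M_{\text{clean}})\!-\!E(M_{\text{WM}}\!+\!\theta)\!=\!F(L)\!+\!\big(E(M_{\text{WM}})\!-\!E(M_{\text{WM}}\!+\!\theta)\big)$. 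Writing $\delta_{\theta}\!=\!E(M_{\text{WM}})\!-\!E(M_{\text{WM}}\!+\!\theta)$ for the degradation relative to the watermarked model, this turns $M_{\text{WM}}\!+\!\theta\!\in\!\mathcal{P}(\Delta)$ into the condition $\delta_{\theta}\!\leq\!\Delta\!-\!F(L)$, i.e.\ $\theta\!\in\!\Theta(\Delta\!-\!F(L))$. No separate additivity assumption is required, since the split of the total degradation into the embedding cost $F(L)$ and the attack cost $\delta_{\theta}$ is a telescoping identity.

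Next I would characterize failure of protection. The length-$L$ scheme fails to protect $\mathcal{P}(\Delta)$ precisely when some budget-feasible $\theta\!\in\!\Theta(\Delta\!-\!F(L))$ reduces the recoverable information below the $J$ bits of the true identity. Recalling $C(\delta,L)\!=\!\min_{\theta\in\Theta(\delta)}u(\theta,L)$ and the robustness threshold $\delta^{*}(L)\!=\!\min_{\delta}\{\delta\!:\!C(\delta,L)\!\leq\! J\}$, and using that $C(\cdot,L)$ is non-increasing in $\delta$ by Theorem~\ref{theorem:1}, the cheapest watermark-removing attack costs at least $\delta^{*}(L)$ of degradation relative to $M_{\text{WM}}$. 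Such an attack lies inside the budget when $\delta^{*}(L)\!\leq\!\Delta\!-\!F(L)$; up to the boundary convention settled below, protection of every $\Delta$-close variant is therefore equivalent to $F(L)\!+\!\delta^{*}(L)\!\geq\!\Delta$, which is exactly the predicate defining $\tilde{L}$ in Eq.~\eqref{equation:9}.

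It then remains to establish that the feasible set has a well-defined minimum equal to $\tilde{L}$. Here I would invoke monotonicity in $L$: $F(L)$ is non-decreasing by the fidelity assumption (injecting more redundancy costs more performance), while $\delta^{*}(L)$ is non-decreasing because $C(\delta,L)$ increases in $L$ (Theorem~\ref{theorem:1}), so a longer message forces the adversary to spend strictly more degradation before the capacity can be driven down to $J$. Consequently $F(L)\!+\!\delta^{*}(L)$ is non-decreasing, the set $\{L\!:\!F(L)\!+\!\delta^{*}(L)\!\geq\!\Delta\}$ is upward closed in $L$, and its least element $\tilde{L}$ is both sufficient (protection holds) and necessary (any $L\!<\!\tilde{L}$ admits a budget-feasible removal, so a message of that length cannot be guaranteed). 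The minimal embedding expense is then read off directly as $F(\tilde{L})$. I would also record the degenerate regime $F(L)\!>\!\Delta$, where the watermarked model already exceeds the protection threshold and the constraint is vacuous; the formula absorbs it, since there $F(L)\!+\!\delta^{*}(L)\!>\!\Delta$ automatically.

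The step I expect to be the main obstacle is pinning down the boundary convention at $C\!=\!J$, because that is what decides whether the defining predicate is the non-strict $\geq\!\Delta$ of Eq.~\eqref{equation:9} or a strict $>\!\Delta$. The operational reading of capacity must be fixed: I would adopt the achievability convention that a channel with capacity at least $J$ still transmits the $J$-bit identity reliably, so the identity is lost only when the capacity is driven strictly below $J$. Under this convention the attacker must strictly exceed the threshold $\delta^{*}(L)$ defined through $C\!\leq\! J$, and this strict gap is exactly what converts the budget comparison into the non-strict inequality $F(L)\!+\!\delta^{*}(L)\!\geq\!\Delta$. I would make this convention, together with the continuity and monotonicity of $C(\cdot,L)$ that guarantee $\delta^{*}(L)$ is attained, explicit at the start of the proof so that the chain of equivalences is airtight.
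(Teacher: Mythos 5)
Your proposal is correct and takes essentially the same route as the paper: the paper's proof is a figure-driven case analysis over the intervals $[0,J)$, $[J,\tilde{L})$, $[\tilde{L},L_{3})$, $[L_{3},\infty)$ that implicitly relies on the same decomposition of the total degradation into the embedding cost $F(L)$ plus the attack cost $\delta$ and on the monotonicity of both terms in $L$. You have simply made explicit what the paper delegates to Fig.~\ref{figure:2} — the telescoping identity, the equivalence of protection with $F(L)+\min_{\delta}\{\delta:C(\delta,L)\leq J\}\geq\Delta$, the upward-closedness of the feasible set, and the boundary convention at $C=J$ — so your write-up is a more rigorous rendering of the same argument.
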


\begin{proof}
The monotonicity of fidelity and robustness are visualized in Fig.~\ref{figure:2}, where $L_{1}\!=\!J$, $L_{2}\!=\!\tilde{L}$, and $L_{3}\!=\!\min_{L}\!\left\{L\!:\!F(L)\!\geq\!\Delta \right\}$.

If $L\!\in\![0,L_{1})$ then the identity information cannot be losslessly encoded as $\mathbf{m}$.
If $L\!\in\![L_{1},L_{2})$ then the adversary can find a model whose performance degradation is no more than $\Delta$ yet escapes the ownership verification.
Such cases constitute the gray area in Fig.~\ref{figure:2}.
If $L\!\in\![L_{2},L_{3})$ then the adversary cannot reduce the capacity below $J$ to escape the ownership verification unless sacrificing the model's performance for more than $\Delta$.
Finally, $L\!\in\![L_{3},\infty)$ is unacceptable since the damage caused by watermarking is too much.

Therefore, $\tilde{L}$ is minimal length of the identity message that satisfies the owner's purposes with the smallest performance degradation and sufficient robustness. 
\end{proof}


\subsection{Capacity estimation}
The capacity defined by Eq.~\eqref{equation:6} can hardly be analytically computed. 
Instead, we resort to Theorem~\ref{theorem:2} by measuring the correlation of the BER of the identity message with the performance degradation and conducting interpolation afterwards. 
An adversarial modification strategy $\mathcal{A}$ is evoked to drive the variations.
This estimation is formulated in Algo.~\ref{algorithm:1}.

\begin{theorem}
\label{theorem:4}
The estimated capacity $\hat{C}(\delta,L)$ by Algo.~\ref{algorithm:1} is an upper bound of $C(\delta,L)$.
\end{theorem}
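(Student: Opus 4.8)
The plan is to reduce the statement to Theorem~\ref{theorem:2}, which already certifies that the true capacity obeys $C(\delta,L)\!\leq\! L\!\cdot\!(1\!-\!H(\epsilon_{\delta}))$ with $\epsilon_{\delta}\!=\!\max_{\theta\in\Theta(\delta)}\epsilon(\theta)$ the worst-case bit error rate over all admissible attacks. Algorithm~\ref{algorithm:1} returns an estimate of exactly this form, except that it substitutes for $\epsilon_{\delta}$ the BER $\epsilon(\theta_{\mathcal{A}})$ actually inflicted by a concrete adversarial strategy $\mathcal{A}$. The whole argument therefore amounts to comparing the heuristic BER with the worst-case BER and then transporting that comparison through the map $x\!\mapsto\! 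L(1\!-\!H(x))$.

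First I would note that $\mathcal{A}$ is a specific, realizable attack confined to the feasible set $\Theta(\delta)$, so the deviation $\theta_{\mathcal{A}}$ it produces is one particular competitor in the maximization defining $\epsilon_{\delta}$. Hence $\epsilon(\theta_{\mathcal{A}})\!\leq\!\epsilon_{\delta}$, and both quantities lie in $[0,\tfrac{1}{2}]$ because Eq.~\eqref{equation:7} caps the error rate at $\tfrac{1}{2}$. Next I would invoke the strict monotonicity of the binary entropy $H$ on $[0,\tfrac{1}{2}]$: the inequality $\epsilon(\theta_{\mathcal{A}})\!\leq\!\epsilon_{\delta}$ gives $H(\epsilon(\theta_{\mathcal{A}}))\!\leq\! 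H(\epsilon_{\delta})$, hence $1\!-\!H(\epsilon(\theta_{\mathcal{A}}))\!\geq\!1\!-\!H(\epsilon_{\delta})$. Multiplying by $L\!\geq\!0$ yields $\hat{C}(\delta,L)\!=\!L(1\!-\!H(\epsilon(\theta_{\mathcal{A}})))\!\geq\! L(1\!-\!H(\epsilon_{\delta}))\!\geq\! C(\delta,L)$, which is the claim. The intuition is that a weaker, realizable attack underestimates how much damage is achievable and therefore overestimates how much capacity survives, so the estimate sits conservatively above the truth.

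The main obstacle is not this chain of inequalities but controlling the interpolation step: Algorithm~\ref{algorithm:1} only probes the BER at finitely many performance-degradation checkpoints and interpolates in between, so I must argue the interpolated curve never reports a BER exceeding the true $\epsilon_{\delta}$ at a queried $\delta$. Because $\Theta(\delta)$ grows with $\delta$, the worst-case BER $\epsilon_{\delta}$ is nondecreasing in $\delta$ (a monotonicity in the spirit of Theorem~\ref{theorem:1}), so the cleanest route is to have the algorithm interpolate conservatively---for instance, assigning to an intermediate $\delta$ the BER measured at the nearest checkpoint $\delta'\!\leq\!\delta$. Monotonicity of $\epsilon_{\delta}$ then preserves the pointwise bound $\epsilon(\theta_{\mathcal{A}})\!\leq\!\epsilon_{\delta}$ at every evaluation point, and the second paragraph's argument propagates to give $\hat{C}(\delta,L)\!\geq\! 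C(\delta,L)$ uniformly in $\delta$.
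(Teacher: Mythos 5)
Your proposal is correct and follows essentially the same route as the paper: the attack $\mathcal{A}$ realizes only a subset of the admissible perturbations $\Theta(\delta)$, so its BER lower-bounds the worst-case $\epsilon_{\delta}$, and the monotone decrease of $L\,(1-H(\epsilon))$ on $[0,\tfrac{1}{2}]$ transports this into $\hat{C}(\delta,L)\geq L\,(1-H(\epsilon_{\delta}))\geq C(\delta,L)$ via Theorem~\ref{theorem:2}. Your added remark on conservative interpolation between the finitely many probed checkpoints is a sensible refinement that the paper's proof does not address, but it does not change the core argument.
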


\begin{proof}
The adversarial modification strategy $\mathcal{A}$ exerts one certain family of parameter perturbation $\hat{\Theta}(\delta)\!\subset\!\Theta(\delta)$.
The BER exclusively depends on $\theta$ according to Eq.~\eqref{equation:7}, so:
\begin{equation}
\nonumber
\epsilon_{\delta}=\max_{\theta\in\Theta(\delta)}\left\{\epsilon(\theta) \right\}\geq\max_{\theta\in\hat{\Theta}(\delta)}\left\{\epsilon(\theta) \right\}=\hat{\epsilon}_{\delta},
\end{equation}
since the r.h.s. of Eq.~\eqref{equation:8} is a monotonically decreasing function in $\epsilon$ (when $\epsilon<=0.5$), we have:
\begin{equation}
\nonumber
C(\delta,L)\!\leq\! L\cdot(1\!-\!H(\epsilon_{\delta}))\!\leq\! L\cdot(1\!-\!H(\hat{\epsilon}_{\delta}))\!=\!\hat{C}(\delta,L).
\end{equation}
\end{proof}

\begin{algorithm}[!t]
\caption{Capacity estimation.}
\label{algorithm:1}
\begin{algorithmic}[1]
\REQUIRE $M_{\text{WM}}$, $K$, $\mathbf{m}$, $L$, $\mathcal{A}$
\ENSURE Capacity estimation $\hat{C}(\delta,L)$
\STATE Initialize a memory $\mathbf{D}=\left\{ \right\}$.
\STATE Initialize $\hat{M}=M_{\text{WM}}$, $\delta=0$.
\STATE $\hat{\epsilon}_{0}=\min\left\{\frac{1}{2},\frac{\|\mathbf{m}\oplus \texttt{Verify}(\hat{M},K) \|_{0}}{L} \right\}$.
\STATE Save $\langle 0,\hat{C}(0,L)\!=\!L\!\cdot\!(1\!-\!H(\hat{\epsilon}_{0})) \rangle$ in $\mathbf{D}$.
\WHILE {$\hat{\epsilon}_{\delta}<= 0.5$}
\STATE Conduct adversarial modification $\hat{M}\leftarrow \mathcal{A}(\hat{M})$.
\STATE $\delta\!=\!E(M_{\text{WM}})\!-\!E(\hat{M})$.
\STATE Compute $\hat{\epsilon}_{\delta}$ as line 3.
\STATE Save $\langle \delta,\hat{C}(\delta,L)\!=\!L\!\cdot\!(1\!-\!H(\hat{\epsilon}_{\delta})) \rangle$ in $\mathbf{D}$.
\ENDWHILE
\STATE Return $\mathbf{D}$.
\end{algorithmic}
\end{algorithm}

If the adversarial modification strategy $\mathcal{A}$ is not destructive enough, i.e., $\hat{\epsilon}_{\delta}\ll \epsilon_{\delta}$, then the capacity would be overestimated.
Consequently, the minimal length of the identity message recommended by Theorem~\ref{theorem:3} after replacing $C(\delta,L)$ by $\hat{C}(\delta,L)$ is smaller and unsafe. 
To tighten the bound, we conduct an adversarial overwriting attack that directly embeds another randomly generated message into the pirated model with the same ownership key as shown in Table~\ref{table:attacks}.

In contrast to other advanced attacks such as reverse engineering~\cite{main:43}, or functionality equivalence attack~\cite{neuronmap}, adversarial overwriting can be directly generalized to any gradient-based DNN watermarking scheme by manipulating the watermark embedding process.
Adversarial overwriting adopts the strongest threat model where the adversary has full knowledge of the ownership key and covers the worst cases, e.g., the ownership verification has been exposed to malicious parties or there are internal enemies. 
Therefore, the bound is expected to be the tightest and provides a reliable reference for owners. 

\begin{table}[!t]
\centering
\caption{Comparison between adversarial modifications.
$\mathcal{L}_{0}$ is the model's original training loss function.
$\mathbf{m}_{\text{Adv}}$ and $K_{\text{Adv}}$ are the adversary's identity message and ownership key.
}
\scalebox{0.75}{
\begin{tabular}{c|c}
\toprule
\textbf{Attack method} & \textbf{Loss function to be optimized}\\
\toprule
Fine-tuning & $\mathcal{L}_{0}(M_\text{WM})$ \\
Overwriting & $\mathcal{L}_{0}(M_\text{WM})\!+\!\lambda\!\cdot\! \mathcal{L}_{\text{WM}}(M_\text{WM},K_{\text{Adv}},\mathbf{m}_{\text{Adv}})$ \\
\textbf{Adversarial overwriting} & $\mathcal{L}_{0}(M_\text{WM})\!+\!\lambda\!\cdot\! \mathcal{L}_{\text{WM}}(M_\text{WM},K,\mathbf{m}_{\text{Adv}})$ \\
\bottomrule
\end{tabular}}
\label{table:attacks}
\end{table}

\begin{figure}[!tbp]
\centering
\subfigure[]{
\begin{minipage}[htbp]{0.31\linewidth}
\centering
\includegraphics[width=2.3cm,height=2.4cm]{./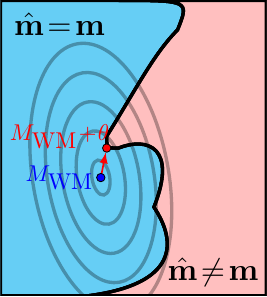}
\end{minipage}
}\subfigure[]{
\begin{minipage}[htbp]{0.31\linewidth}
\centering
\includegraphics[width=2.3cm,height=2.4cm]{./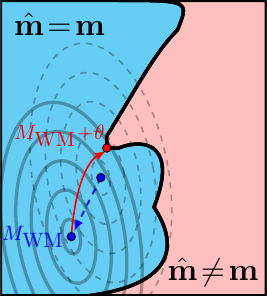}
\end{minipage}
}\subfigure[]{
\begin{minipage}[htbp]{0.31\linewidth}
\centering
\includegraphics[width=2.3cm,height=2.4cm]{./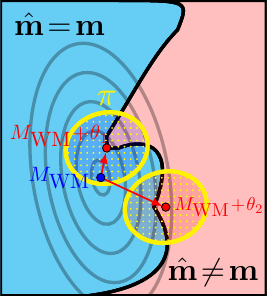}
\end{minipage}
}
\caption{
The contours denote levels of performance degradation.
(a) An adversarial modification.
(b) Increasing robustness as a defense.
(c) Averaging multiple rounds of ownership verification as a defense. $M_{\text{WM}}\!+\!\theta_{1}$ denotes a failed attack. 
$M_{\text{WM}}\!+\!\theta_{2}$ denotes a successful attack. 
}
\label{figure:3}
\end{figure}

\subsection{Breaking the capacity bottleneck}
Let the number of all legal owners or models be $2^{J}$, so the identity information of each owner contains $J$ bits. 
In one-time DNN ownership verification, the code rate, i.e., the volume of information transmitted in each round of communication, is $\frac{\log_{2} 2^J}{1}\!=\!J$ bits. 
According to basic information theory, the code rate cannot exceed the capacity for accurate communication, therefore $J\!\leq\! C(\delta,L)\!\leq\! L$.

When $L$ is fixed (so the price of watermarking is no more than $F(L)$), there are two approaches to increasing the number of entities that can use the copyright protection service. 
The first is to increase the capacity, whose necessary prerequisite is reducing the BER by Theorem~\ref{theorem:2}. 
Yet this additional robustness might increase the performance degradation due to watermark embedding as shown in Fig.~\ref{figure:3}(b). 

The other approach is to retrieve the identity message for $R$ times and average the results. 
The motivation is that the adversary erases the watermark with the smallest performance degradation, so the victim model falls at the decision boundary close to the watermarked model. 
So it is expected that the correct identity message can still be retrieved from neighbours of the victim model as shown in Fig.~\ref{figure:3}(c). 
Under this setting, the code rate becomes $\nicefrac{J}{R}$ bits and at most $2^{L}$ owners or models are verifiable with $R\!\geq\!\frac{L}{C(\delta,L)}$. 

However, the adversary is not obligated to use random modification during multiple rounds of ownership verification. 
So the error pattern $\mathbf{m}\oplus\hat{\mathbf{m}}$ could be fixed and cannot be eliminated by averaging.  
Instead, it is the judge's responsibility to incorporate randomness into the channel by perturbing the victim model and alleviating adversarial influences.

To randomize the noises in the channel, the judge incorporates parameter deviations $\mu$ following a distribution $\pi$.
In ownership verification, the identity message is retrieved by averaging the results as decoding the error correction code.
Concretely, the $i$-th bit is determined by the majority voting:
\begin{footnotesize}
\begin{equation}
\label{equation:10}
\hat{\mathbf{m}}[i]=\arg\max_{b\in\left\{0,1\right\}}\left\{\sum_{r=1}^{R}\mathbb{I}\left[\texttt{Verify}(M\!+\!\mu_{r},\!K)[i]=b\right]\right\}.
\end{equation}
\end{footnotesize}
The owner can further reduce the sensitivity of the identity message against potential adversarial modifications with the following regularizer during watermark embedding:
\begin{equation}
\label{equation:11}
\mathcal{L}_{\text{WM}}^{M}(M_{\text{WM}},K,\mathbf{m})\!=\!\frac{1}{P}\sum_{p=1}^{P}\mathcal{L}_{\text{WM}}(M_{\text{WM}}\!+\!\mu_{p},\!K,\!\mathbf{m}).
\end{equation}
It has be proven that if $\pi$ is a normal distribution $\mathcal{N}(0,\sigma^{2}\mathbf{I})$ and the scale of adversarial modification is restricted by $\|\theta\|_{2}\leq \rho$ then the BER of the estimation Eq.~\eqref{equation:10} is upper bounded by
$\inf\left\{y:\text{Pr}\left\{\epsilon(\theta)\geq y \right\}\leq\Phi\left(-\nicefrac{\rho}{\sigma}\right) \right\}$, where $\Phi(\cdot)$ is the culmulative distribution function of a normal distribution~\cite{main:17}. 


\begin{table}[!t]
\centering
\caption{Comparison between configurations of the watermark embedding regularizer and the verification formula. 
MROV and MROV-V denote \textbf{M}ultiple \textbf{R}ounds of \textbf{O}wnership \textbf{V}erification and its \textbf{V}ariational version respectively. 
}
\scalebox{0.75}{
\begin{tabular}{c|c|c|c}
\toprule
\textbf{Configuration} & \tabincell{c}{\textbf{Verification}\\ \textbf{formula}} & \tabincell{c}{\textbf{Embedding}\\ \textbf{regularizer}} & \tabincell{c}{\textbf{Applicability to}\\ \textbf{black-box}} \\
\toprule
Baseline & Eq.~\eqref{equation:3} & $\mathcal{L}_{\text{WM}}$ & \Checkmark \\
MROV & Eq.~\eqref{equation:10} & $\mathcal{L}_{\text{WM}}$ & \XSolidBrush \\
Certified robustness & Eq.~\eqref{equation:10} & Eq.~\eqref{equation:11} & \XSolidBrush \\
\hline
MROV-V-1 & Eq.~\eqref{equation:13} & $\mathcal{L}_{\text{WM}}$ & \Checkmark \\
MROV-V-2 & Eq.~\eqref{equation:13} & Eq.~\eqref{equation:14} & \Checkmark \\
\bottomrule
\end{tabular}}
\label{table:2}
\end{table}

Such a paradigm cannot be applied to black-box DNN watermarking schemes since the parameters within the victim model are hidden so $\mu$ is nowhere to be added.
In general cases, the judge resorts to the ownership key $K$, which is always available.
Concretely, a distribution $\pi'$ over the ownership key is explored as a substitution of $\pi$ such that the distribution over the embedding loss remains invariable, i.e., for any real number $t$:
\begin{footnotesize}
\begin{equation}
\label{equation:12}
\mathop{\text{Pr}}\limits_{\mu\leftarrow\pi}\!(\mathcal{L}_{\text{WM}}(M_{\text{WM}}\!+\!\mu,K\!,\textbf{m})\!\leq\!t)\!=\!\mathop{\text{Pr}}\limits_{\kappa\leftarrow\pi'}\!(\mathcal{L}_{\text{WM}}(M_{\text{WM}}\!,K\!+\!\kappa,\textbf{m})\!\leq\!t),
\end{equation}
\end{footnotesize}


The distribution $\pi'$ could be very complicated, especially for black-box schemes where the ownership key is a set of images or texts.
Therefore, we implement the distribution transfer by fitting $\pi'$ with a parameterized generator $G$. 

A collection of parameters pertubations $\left\{\mu_{q} \right\}_{q=1}^{Q}$ are randomly sampled from $\pi$. 
Then it is transformed into perturbations on the ownership key $\left\{\kappa_{q} \right\}_{q=1}^{Q}$ subject to:
\begin{equation}
\label{equation:add1}
\forall q\text{, }\mathcal{L}_{\text{WM}}(M_{\text{WM}}\!+\!\mu_{q},K\!,\!\mathbf{m})\!=\!\mathcal{L}_{\text{WM}}(M_{\text{WM}},K\!+\!\kappa_{q},\!\mathbf{m}). 
\end{equation}
A variational autoencoder~\cite{pu2016variational} is trained to reconstruct $\left\{\kappa_{q} \right\}_{q=1}^{Q}$, whose decoder is returned as $G$. 
The accuracy of this setting is established by the following theorem.

\begin{theorem}
\label{theorem:5}
$\pi'\!=\!\left\{G(\mathbf{z})\!:\!\mathbf{z}\!\leftarrow\!\mathcal{N}(0,\mathbf{I}) \right\}$ satisfies Eq.~\eqref{equation:12}.
\end{theorem}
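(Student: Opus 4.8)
The plan is to read the construction as a loss-preserving change of variables and then to identify $\pi'$ with the push-forward of $\pi$ through the defining property of the variational autoencoder. First I would make explicit the transformation implicitly defined by Eq.~\eqref{equation:add1}. Let $T$ denote the (measurable) map that sends a parameter perturbation $\mu$ to a key perturbation $\kappa=T(\mu)$ obeying
\begin{equation}
\nonumber
\mathcal{L}_{\text{WM}}(M_{\text{WM}}\!+\!\mu,K,\mathbf{m})\!=\!\mathcal{L}_{\text{WM}}(M_{\text{WM}},K\!+\!T(\mu),\mathbf{m}).
\end{equation}
Because each $\mu_{q}$ is drawn i.i.d.\ from $\pi$ and each $\kappa_{q}=T(\mu_{q})$, the training set $\{\kappa_{q}\}_{q=1}^{Q}$ consists of i.i.d.\ samples from the push-forward measure $T_{\ast}\pi$. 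This reduces the claim to two subgoals: that the decoder reproduces $T_{\ast}\pi$, and that $T_{\ast}\pi$ induces the correct loss distribution.

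Next I would invoke the defining property of the VAE. Assuming the decoder is trained to optimality on $\{\kappa_{q}\}_{q=1}^{Q}$ (in the limit $Q\to\infty$), pushing the standard Gaussian prior through $G$ recovers exactly the data-generating distribution of its training samples, so that $\pi'=\{G(\mathbf{z}):\mathbf{z}\!\leftarrow\!\mathcal{N}(0,\mathbf{I})\}=T_{\ast}\pi$. With this identification, any $\kappa\sim\pi'$ can be written as $\kappa=T(\mu)$ for some $\mu\sim\pi$, and the loss-preservation constraint gives, for every real number $t$,
\begin{footnotesize}
\begin{equation}
\nonumber
\mathop{\text{Pr}}\limits_{\kappa\leftarrow\pi'}(\mathcal{L}_{\text{WM}}(M_{\text{WM}},K\!+\!\kappa,\mathbf{m})\!\leq\! t)\!=\!\mathop{\text{Pr}}\limits_{\mu\leftarrow\pi}(\mathcal{L}_{\text{WM}}(M_{\text{WM}}\!+\!\mu,K,\mathbf{m})\!\leq\! t),
\end{equation}
\end{footnotesize}
which is precisely Eq.~\eqref{equation:12}. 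Intuitively, $T$ transports the randomness from the hidden parameters onto the observable key without altering the scalar loss, and $G$ is the learned realization of $T_{\ast}\pi$.

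The main obstacle is the VAE step itself: the equality $\pi'=T_{\ast}\pi$ is exact only when the generative model matches its training distribution perfectly, whereas in practice the decoder merely maximizes the evidence lower bound and attains this approximately. I would therefore state optimal training with vanishing generative error as the standing assumption under which the theorem holds exactly, and remark that otherwise Eq.~\eqref{equation:12} holds only up to the VAE's approximation gap. A secondary, benign point is that $T$ need not be single-valued, since many key perturbations can realize the same loss; this is harmless because Eq.~\eqref{equation:12} constrains only the distribution of the scalar loss, so any measurable loss-preserving selection $T$ suffices.
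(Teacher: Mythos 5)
Your proposal is correct and follows essentially the same route as the paper's own proof: both arguments rest on (i) the generator $G$ reproducing the distribution of its training samples $\{\kappa_{q}\}$ when fed the Gaussian prior, and (ii) the loss-preserving correspondence of Eq.~\eqref{equation:add1} transferring the event $\{\mathcal{L}_{\text{WM}}\leq t\}$ from key perturbations back to parameter perturbations; the paper phrases this via empirical fractions of samples while you phrase it via the push-forward measure $T_{\ast}\pi$. Your explicit flagging of the idealized assumption (optimally trained VAE with vanishing generative error) is a point the paper leaves implicit, and is a fair strengthening of the exposition rather than a deviation in approach.
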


\begin{proof}
Statistically, the l.h.s. of Eq.~\eqref{equation:12} equals the portion of samples $\mu$ satisfying $\mathcal{L}_{\text{WM}}(M\!+\!\mu,K\!,\mathbf{m})\!\leq\!t$.
With $\pi'\!=\!\left\{G(\mathbf{z})\!:\!\mathbf{z}\!\leftarrow\!\mathcal{N}(0,\mathbf{I}) \right\}$, the r.h.s. of Eq.~\eqref{equation:12} is reduced to:
\begin{footnotesize}
\begin{equation}
\nonumber
\begin{aligned}
&\int \text{Pr}(\mathbf{z})\cdot\mathbb{I}\left[(\mathcal{L}_{\text{WM}}(M_{\text{WM}},K\!+\!G(\mathbf{z}),\mathbf{m})\leq t\right]\text{d} \mathbf{z}\\
\!=\!&\frac{|\!\left\{q\!:\!\mathcal{L}_{\text{WM}}(M_{\text{WM}}\!,\!K\!+\!\kappa_{q},\!\mathbf{m})\!\leq\! t\right\}\!|}{Q}\!=\!\frac{|\!\left\{q\!:\!\mathcal{L}_{\text{WM}}(M_{\text{WM}}\!+\!\mu_{q},\!K,\!\mathbf{m})\!\leq\! t\right\}\!|}{Q},
\end{aligned}
\end{equation}
\end{footnotesize}
where the second equation holds since $G$ is trained by assuming its inputs as a normal distribution, and the last equation holds by Eq.~\eqref{equation:add1}.
\end{proof}

Having obtained the ownership key perturbation generator $G$, the message is retrieved by:
\begin{footnotesize}
\begin{equation}
\label{equation:13}
\hat{\mathbf{m}}[i]=\arg\max_{b\in\left\{0,1\right\}}\left\{\sum_{r=1}^{R}\mathbb{I}\left[\texttt{Verify}(M,\!K\!+\!\kappa_{r})[i]=b\right]\right\}.
\end{equation}
\end{footnotesize}
Theoretically, Eq.~\eqref{equation:13} is the variational approximation version of Eq.~\eqref{equation:10} so the results are expected to be similar if $G$ accurately captures the influence of $\pi$. 
Since $G$ depends on a watermarked model, the owner can watermark its model, run the variational distribution transfer program, and further fine-tune the watermarked model with the following regularizer to enhance the robustness:
\begin{equation}
\label{equation:14}
\mathcal{L}_{\text{WM}}^{K}(M_{\text{WM}},K\!,\!\mathbf{m})\!=\!\frac{1}{P}\sum_{p=1}^{P}\mathcal{L}_{\text{WM}}(M_{\text{WM}},\!K\!+\!\kappa_{p},\!\mathbf{m}). 
\end{equation}
 
Different configurations of the watermarking scheme regarding the accurate transmission of the identity message are summarized in Table~\ref{table:2} and Fig.~\ref{figure:4}. 

\begin{figure}[!tbp]
\centering
\subfigure[MROV.]{
\begin{minipage}[htbp]{0.5\linewidth}
\centering
\includegraphics[width=3.6cm,height=3cm]{./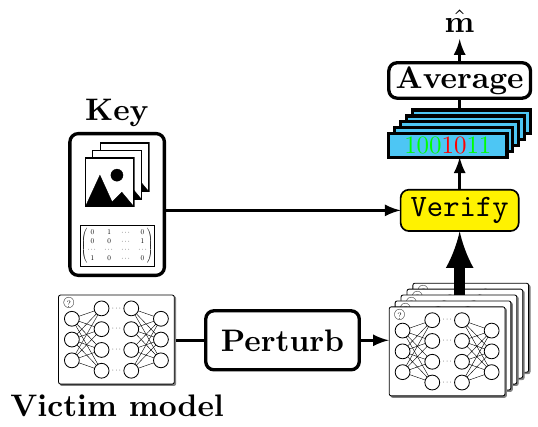}
\end{minipage}
}\subfigure[MROV-V.]{
\begin{minipage}[htbp]{0.5\linewidth}
\centering
\includegraphics[width=4cm,height=3cm]{./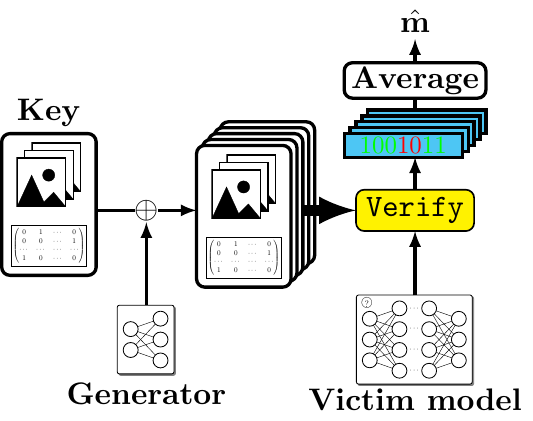}
\end{minipage}
}
\caption{Multiple rounds of ownership verification, conducted by the judge.
(a) can only be applied to white-box schemes. 
(b) can be applied to any scheme. 
}
\label{figure:4}
\end{figure}


\section{Experiments and Discussions}
\label{sec:4}
\subsection{Settings}
We studied the capacity properties of several representative DNN watermarking schemes. 
\textbf{Uchida} et~al.'s scheme~\cite{main:38} is a prototypical static parameter-based white-box watermarking scheme. 
The identity message is explicitly defined as Eq.~\eqref{equation:4}. 
Spread-Transform Dither Modulation watermarking \textbf{(STDM)}~\cite{main:42} is a variant of Uchida et~al.'s scheme with STDM activation function. 
\textbf{MTLSign}~\cite{main:46} is a dynamic white-box scheme. 
Each bit of the identity message is retrieved from the prediction for a trigger returned from a binary classifier based on hidden neurons' responses. 
\textbf{Content} is a representative pattern of triggers defined in Zhang et~al.'s black-box scheme~\cite{main:8}. 
\textbf{Exponential}-weighting~\cite{main:11} uses normal training sample as triggers with exponential regularizer during watermark embedding. 
\textbf{Frontier}-stitching~\cite{main:9} uses samples close to the decision boundary with adversarial perturbations as triggers. 

\begin{figure}[!t]
\centering
\subfigure[Uchida.]{
\begin{minipage}[htbp]{0.5\linewidth}
\centering
\includegraphics[width=4cm,height=2cm]{./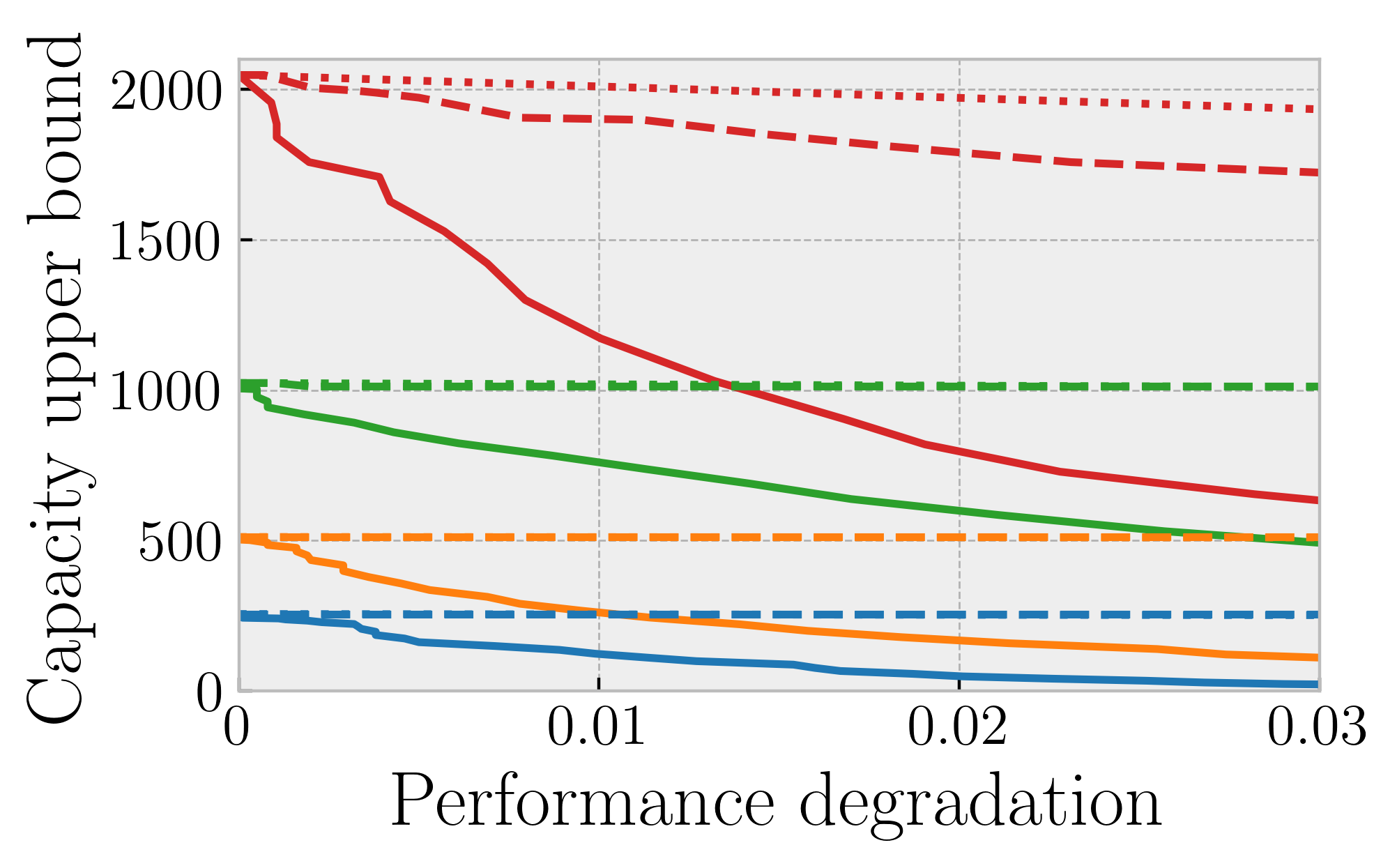}
\end{minipage}
}\subfigure[STDM.]{
\begin{minipage}[htbp]{0.5\linewidth}
\centering
\includegraphics[width=4cm,height=2cm]{./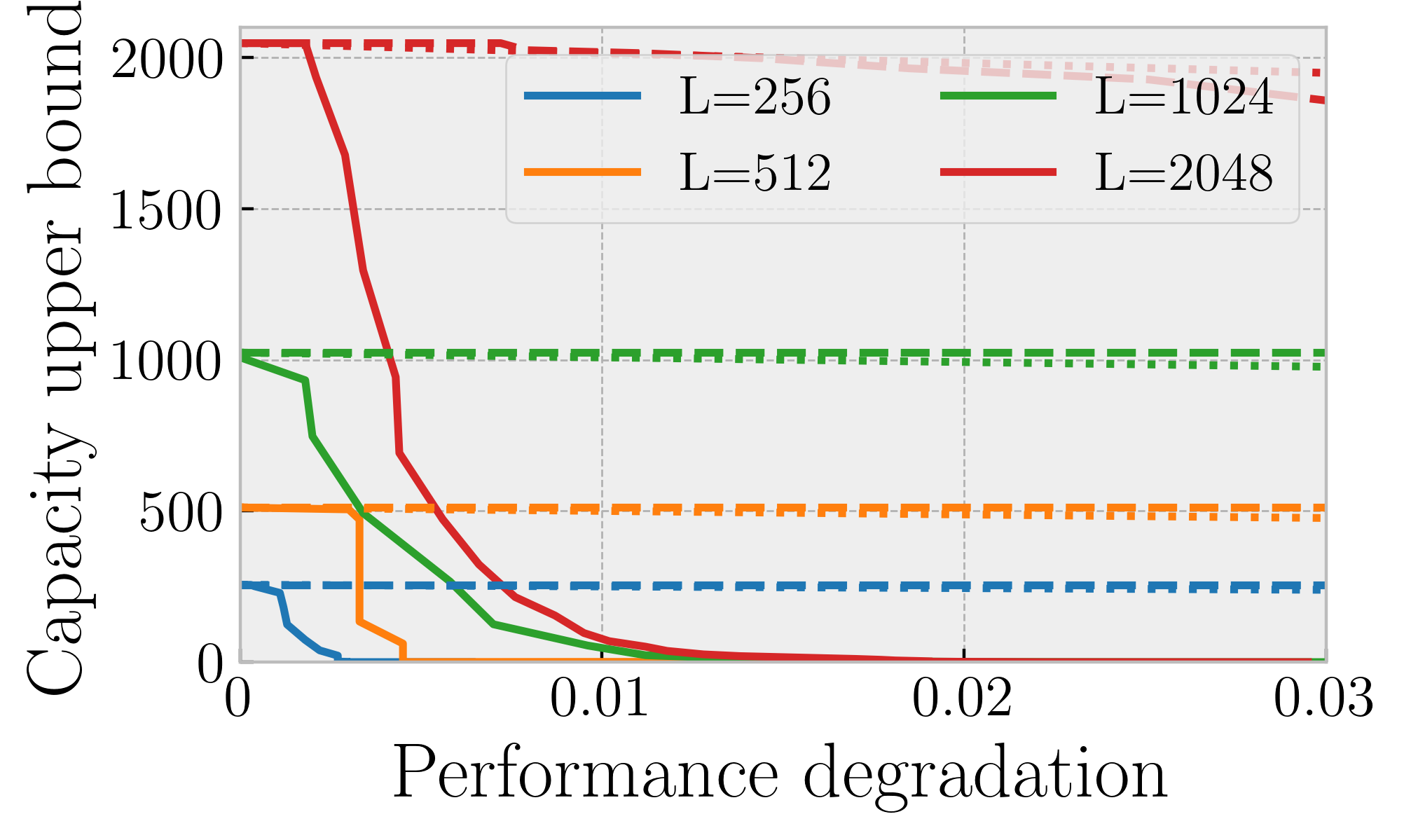}
\end{minipage}
}
\subfigure[MTLSign.]{
\begin{minipage}[htbp]{0.5\linewidth}
\centering
\includegraphics[width=4cm,height=2cm]{./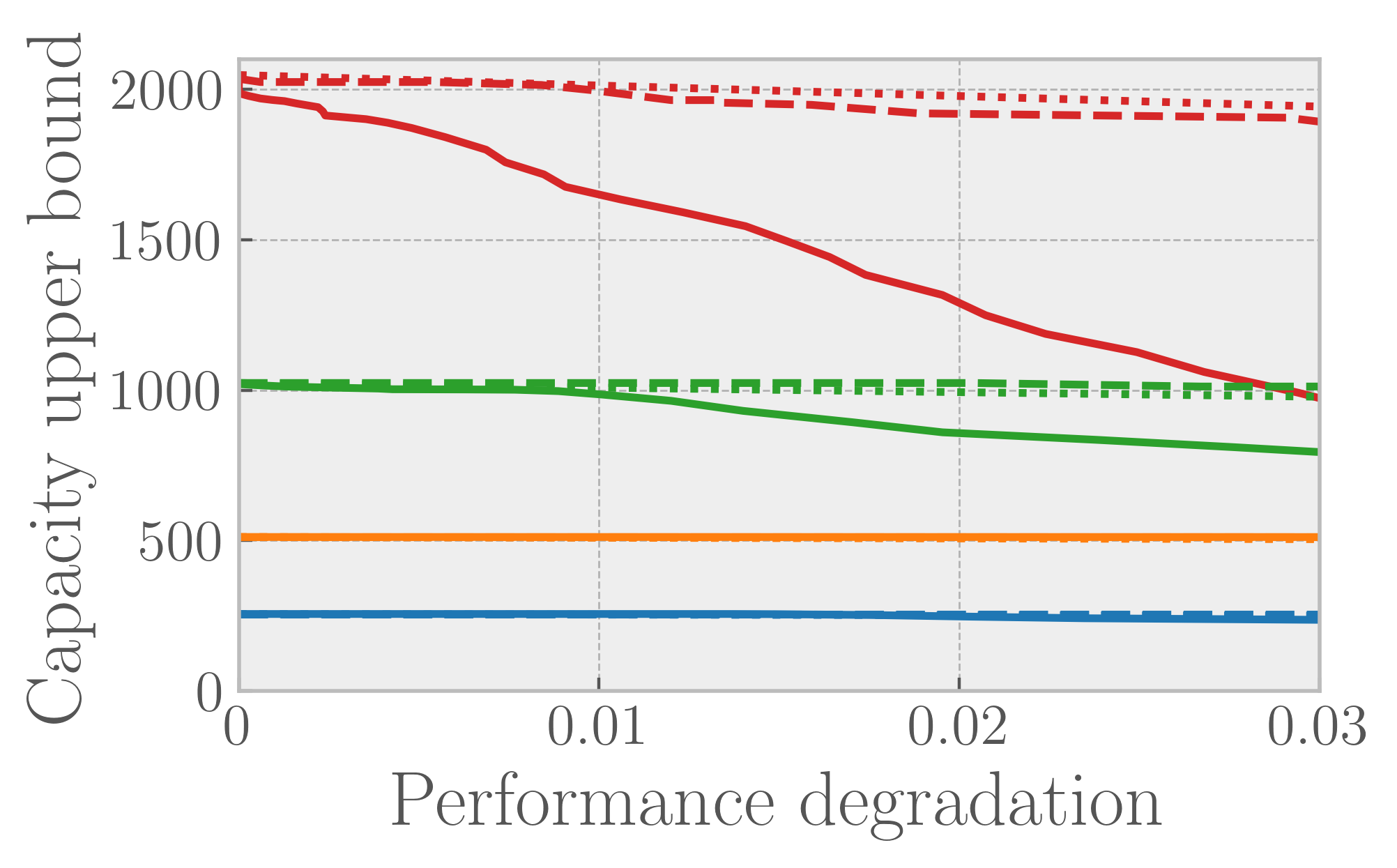}
\end{minipage}
}\subfigure[Content.]{
\begin{minipage}[htbp]{0.5\linewidth}
\centering
\includegraphics[width=4cm,height=2cm]{./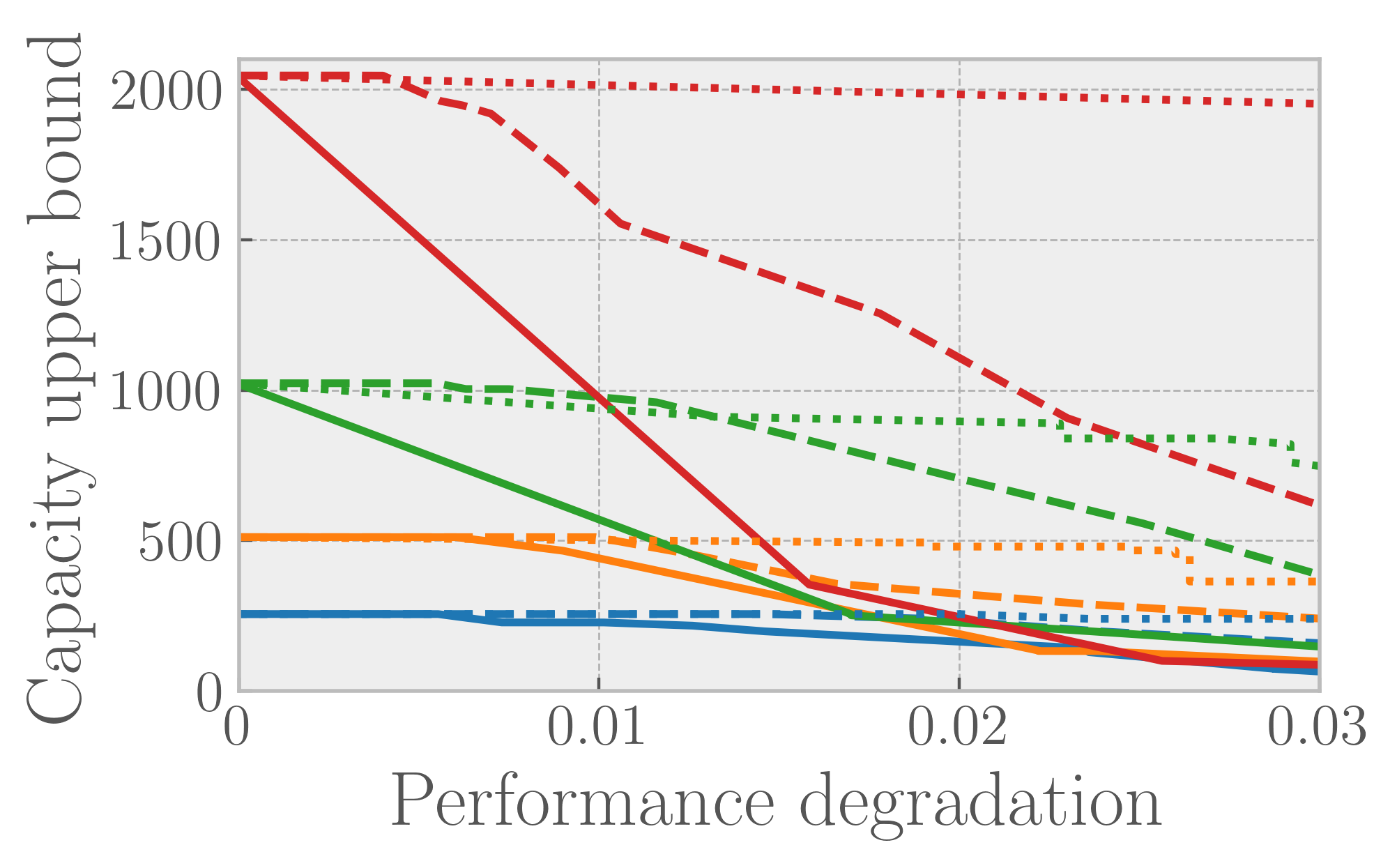}
\end{minipage}
}
\subfigure[Exponential.]{
\begin{minipage}[htbp]{0.5\linewidth}
\centering
\includegraphics[width=4cm,height=2cm]{./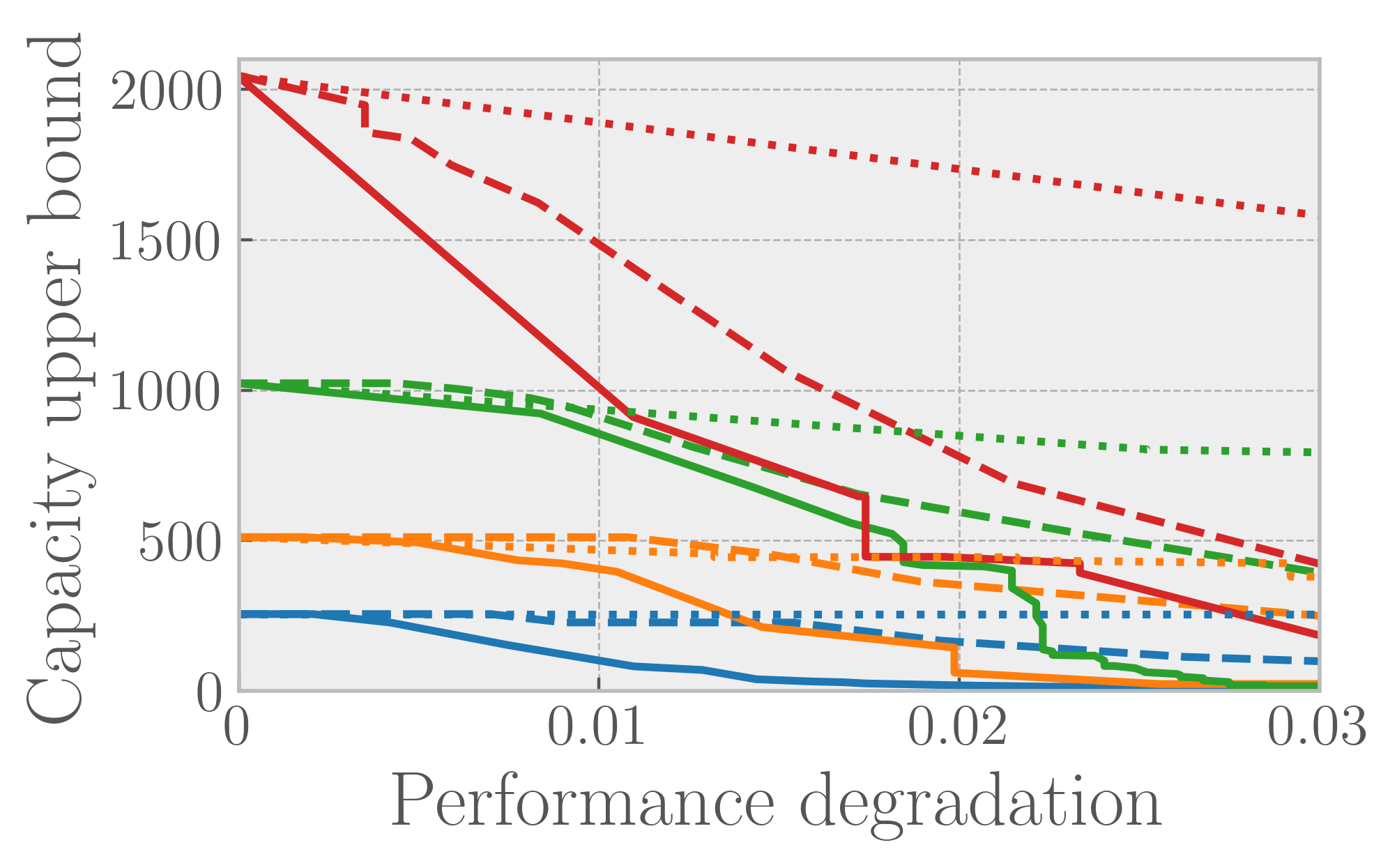}
\end{minipage}
}\subfigure[Frontier.]{
\begin{minipage}[htbp]{0.5\linewidth}
\centering
\includegraphics[width=4cm,height=2cm]{./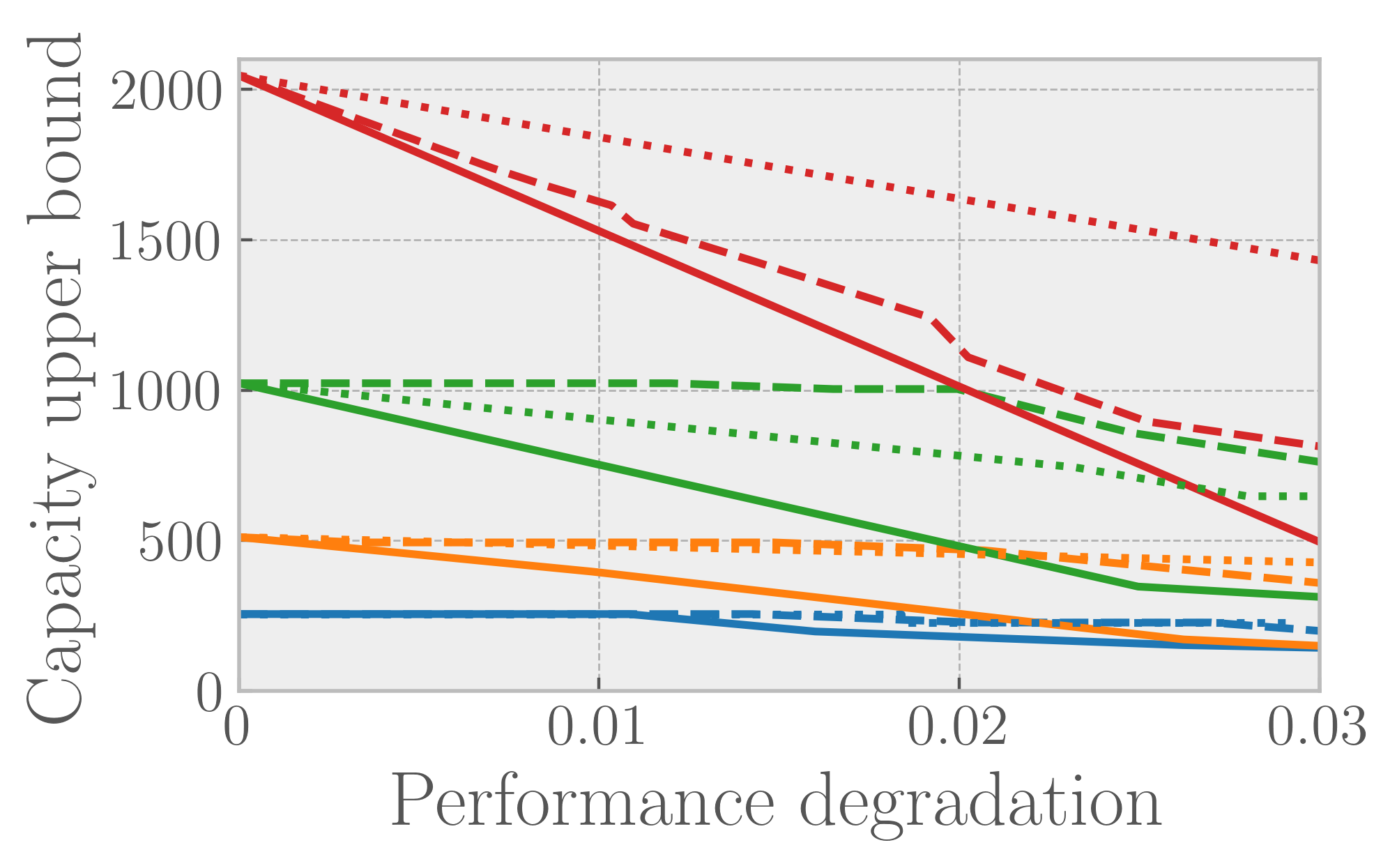}
\end{minipage}
}
\caption{Estimated capacity $\hat{C}(\delta,L)$ as a function of the performance degradation $\delta$ under fine-tuning (\protect\tikz[baseline]{\protect\draw[line width=1pt,dotted](0,0.5ex)--(0.5,0.5ex);}), 
neuron-pruning (\protect\tikz[baseline]{\protect\draw[line width=1pt,dashed](0,0.5ex)--(0.5,0.5ex);}), and 
adversarial overwriting (\protect\tikz[baseline]{\protect\draw[line width=1pt](0,0.5ex)--(0.5,0.5ex);}). 
The length of the identity message $L$ is set as {\color{RoyalBlue}256}, {\color{Orange}512}, {\color{LimeGreen}1024}, and {\color{Crimson}2048}. 
}
\label{figure:42}
\end{figure}

The task is classification on CIFAR-10~\cite{krizhevsky2009learning}. 
The architecture of DNN model to be protected is ResNet-50~\cite{he2016deep}. 
To compile the identity message into classification predictions, we adopted the vanilla interpreter and mapped the prediction of each trigger into $\lfloor\log_{2} 10 \rfloor\!=\!3$ bits. 
All experiments were implemented in \texttt{PyTorch} framework~\footnote{Source codes are available upon request.}.

\begin{table*}[!t]
\centering
\caption{Expense of copyright protection measured by the necessarily minimal length of the identity message (a) and performance degradation due to watermarking (b) computed by Theorem~\ref{theorem:3} when the ownership can be accurately retrieved from variants of the watermarked model with performance degradation no more than $\Delta$. 
\textbf{Bold entries} are optimal configurations. 
$J\!=\!1024$. 
}
\subtable[Minimal length of the identity message $\tilde{L}$ (bit).]{
\scalebox{0.65}{
\begin{tabular}{c|c|c|c|c|c|c}
\toprule
\multirow{2}{*}{$\Delta$} & \multicolumn{6}{c}{\textbf{Watermarking scheme}} \\
\cline{2-7}
& Uchida & STDM & MTLSign & Content & Exponential & Frontier \\
\hline
1\% & 1600$\pm$50 & 3900$\pm$300 & \textbf{1300}$\pm$\textbf{100} & 1500$\pm$50 & 1800$\pm$200 & 1450$\pm$50 \\
2\% & 2100$\pm$200 & $\geq$8192 & \textbf{1600}$\pm$\textbf{200} & 1900$\pm$50 & 2800$\pm$800 & 1850$\pm$50 \\
3\% & 3600$\pm$600 & $\geq$8192 & \textbf{1900}$\pm$\textbf{300} & 4100$\pm$600 & 5000$\pm$100 & 2800$\pm$300 \\
\bottomrule
\end{tabular}}}%
\subtable[Minimal performance degradation due to watermarking $F( \tilde{L})$ (\%).]{
\scalebox{0.65}{
\begin{tabular}{c|c|c|c|c|c|c}
\toprule
\multirow{2}{*}{$\Delta$} & \multicolumn{6}{c}{\textbf{Watermarking scheme}} \\
\cline{2-7}
& Uchida & STDM & MTLSign & Content & Exponential & Frontier \\
\hline
1\% & 0.296$\pm$0.003 & 0.376$\pm$0.029 & 0.162$\pm$0.012 & 0.130$\pm$0.004 & 0.156$\pm$0.017 & \textbf{0.124$\pm$0.004} \\
2\% & 0.326$\pm$0.012 & $\geq$0.790 & 0.199$\pm$0.025 & 0.165$\pm$0.004 & 0.243$\pm$0.069 & \textbf{0.158$\pm$0.004} \\
3\% & 0.415$\pm$0.030 & $\geq$0.790 & \textbf{0.237$\pm$0.037} & 0.355$\pm$0.052 & 0.433$\pm$0.009 & 0.249$\pm$0.026 \\
\bottomrule
\end{tabular}}}
\label{table:42}
\end{table*}

\subsection{Capacity estimation}
The dataset with which the adversary conducts the attacks was 10\% of the entire training dataset. 
$\lambda$ in adversarial overwriting formulated in Table~\ref{table:attacks} was set as $0.1$. 
Fine-tuning and neuron-pruning were implemented as in~\cite{main:5}. 
Each configuration was repeated for five times, the mean capacity estimation of all watermarking schemes by Algo.~\ref{algorithm:1} is illustrated in Fig.~\ref{figure:42}. 
We made the following observations.  

(i) In general, $\hat{C}(\delta,L)$ declines in $\delta$ and increases in $L$ as predicted by Theorem~\ref{theorem:1}. 
(ii) Compared with universal watermark removal attacks with weaker assumptions on the adversary such as fine-tuning and neuron-pruning, adversarial overwriting yielded the lowest and tightest estimation of the capacity since it directly tampers with the watermark while preserves the model's performance simultaneously. 
(iii) Capacity varies with the watermarking scheme. 
As a covert version of \textbf{Uchida}, the capacity of \textbf{STDM} is extremely small since the embedded information be removed with slight modification and hence negligible performance degradation. 
This fact reflects the tradeoff between capacity and stealthiness. 
The capacity of black-box schemes depends on the pattern of triggers and has to be measured a posteriori. 

As a conclusive evaluation of studied DNN watermarking schemes, we applied Theorem~\ref{theorem:3} and computed the minimal length of the identity message and the minimal performance degradation due to watermark embedding. 
We set $J\!=\!1024$ as a standard digital signature scheme DSA-1024 with $\Delta\!=\!\left\{1\%,2\%,3\% \right\}$. 
As demonstrated in Fig.~\ref{figure:423}, the monotonicity is identical to predictions in Fig.~\ref{figure:2}, the fidelity $F(L)$ was computed and interpolated for $L=\left\{256,512,1024,2048,4096,8192 \right\}$. 
$\tilde{L}$ was estimated using Eq.~\eqref{equation:9} with a granularity of 50 bit, numerical results are given in Table~\ref{table:42}. 
We noted that the scheme with the largest capacity (\textbf{MTLSign}) had the smallest minimal identity message length, but the corresponding performance degradation is not always the lowest. 
Therefore, a good watermarking scheme should have both a large capacity and a high fidelity.


\begin{figure}[!tbp]
\centering
\includegraphics[width=5.2cm,height=3.9cm]{./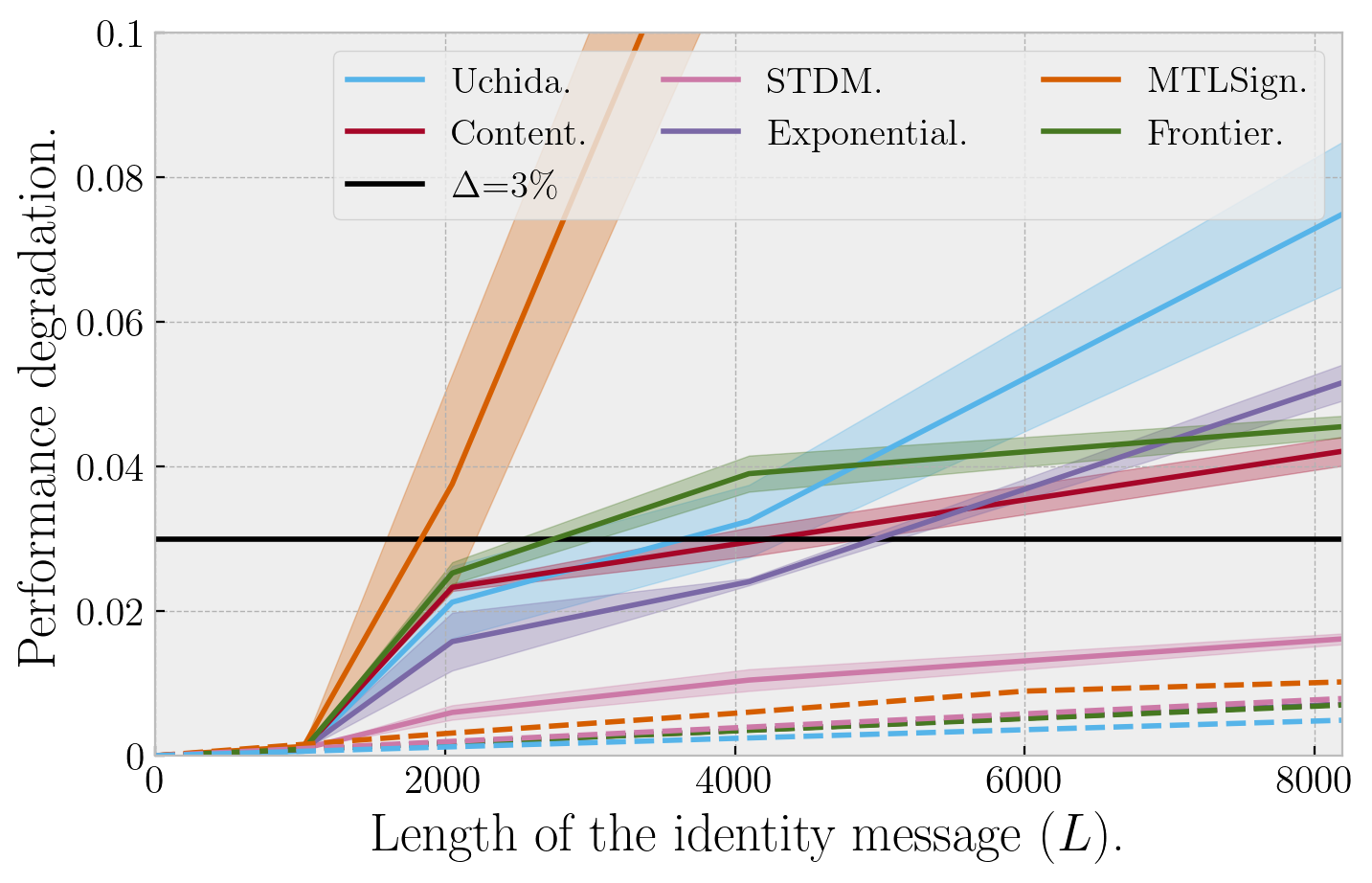}
\caption{Estimated minimal length of the identity message, $J\!=\!1024$. 
The dashed lines mark the cost in fidelity $F(L)$. 
The solid lines mark  $F(L)\!+\!\min_{\delta}$\{$\delta\!:\!\hat{C}(\delta,L)\!\leq\! J$\}, shadow areas denote fluctuations during experiments.
}
\label{figure:423}
\end{figure}

\begin{table*}[!t]
\centering
\caption{Expense of copyright protection using DNN watermarking schemes computed by Theorem~\ref{theorem:3} when the ownership can be accurately retrieved from variants of the watermarked model with performance degradation no more than $\Delta$. $J=1024$. 
\colorbox{gray!25}{Entries marked in shadow} are configurations outperformed by the baseline in Table~\ref{table:42}. 
}
\subtable[Minimal length of the identity message $\tilde{L}$ (bit).]{
\scalebox{0.57}{
\begin{tabular}{c|c|c|c|c|c|c|c}
\toprule
\multirow{2}{*}{$\Delta$} & \multirow{2}{*}{\textbf{Configuration}} & \multicolumn{6}{c}{\textbf{Watermarking scheme}} \\
\cline{3-8}
& & Uchida & STDM & MTLSign & Content & Exponential & Frontier \\
\hline
\multirow{2}{*}{1\%} & MROV-V-1 & 1350$\pm$50 & 2500$\pm$200 & \textbf{1250$\pm$50} & 1400$\pm$50 & 1650$\pm$150 & 1350$\pm$50 \\
& MROV-V-2 & \textbf{1250$\pm$50} & 2400$\pm$50 & \textbf{1250$\pm$50} & 1300$\pm$50 & 1450$\pm$100 & \textbf{1250$\pm$50} \\
\hline
\multirow{2}{*}{2\%} & MROV-V-1 & 1600$\pm$50 & 4200$\pm$400 & 1500$\pm$50 & 1800$\pm$100 & 2700$\pm$800 & 1650$\pm$50 \\
& MROV-V-2 & 1500$\pm$50 & 4250$\pm$150 & \textbf{1450$\pm$50} & 1600$\pm$100 & 2100$\pm$300 & 1500$\pm$50 \\
\hline
\multirow{2}{*}{3\%} & MROV-V-1 & 1800$\pm$50 & 7400$\pm$400 & 1750$\pm$50 & 2500$\pm$300 & 4400$\pm$600 & 2050$\pm$50 \\
& MROV-V-2 & 2000$\pm$50 & 6000$\pm$400 & \textbf{1700$\pm$50} & 2100$\pm$300 & 2700$\pm$500 & 1800$\pm$50 \\

\bottomrule
\end{tabular}}}%
\subtable[Minimal performance degradation due to watermarking $F( \tilde{L})$ (\%).]{
\scalebox{0.57}{
\begin{tabular}{c|c|c|c|c|c|c|c}
\toprule
\multirow{2}{*}{$\Delta$} & \multirow{2}{*}{\textbf{Configuration}} & \multicolumn{6}{c}{\textbf{Watermarking scheme}} \\
\cline{3-8}
& & Uchida & STDM & MTLSign & Content & Exponential & Frontier \\
\hline
\multirow{2}{*}{1\%} & MROV-V-1 & 0.168$\pm$0.005 & 0.241$\pm$0.019 & 0.156$\pm$0.006 & 0.121$\pm$0.004 & 0.143$\pm$0.013 & \textbf{0.115$\pm$0.004} \\
& MROV-V-2 & 0.233$\pm$0.006 & 0.417$\pm$0.009 & \cellcolor{gray!25}0.233$\pm$0.009 & \cellcolor{gray!25}0.236$\pm$0.010 & \cellcolor{gray!25}0.251$\pm$0.017 & \cellcolor{gray!25}0.235$\pm$0.011 \\
\hline
\multirow{2}{*}{2\%} & MROV-V-1 & 0.199$\pm$0.005 & 0.405$\pm$0.039 & 0.187$\pm$0.006 & 0.156$\pm$0.009 & 0.234$\pm$0.069 & \textbf{0.141$\pm$0.004} \\
& MROV-V-2 & 0.280$\pm$0.008 & 0.738$\pm$0.026 & \cellcolor{gray!25}0.271$\pm$0.010 & \cellcolor{gray!25}0.291$\pm$0.018 & \cellcolor{gray!25}0.364$\pm$0.052 & \cellcolor{gray!25}0.282$\pm$0.010 \\
\hline
\multirow{2}{*}{3\%} & MROV-V-1 & 0.249$\pm$0.006 & 0.714$\pm$0.039 & \textbf{0.226$\pm$0.006} & 0.281$\pm$0.026 & 0.381$\pm$0.052 & 0.245$\pm$0.004 \\
& MROV-V-2 & 0.336$\pm$0.009 & 1.042$\pm$0.069 & \cellcolor{gray!25}0.318$\pm$0.009 & \cellcolor{gray!25}0.382$\pm$0.055 & \cellcolor{gray!25}0.468$\pm$0.087 & \cellcolor{gray!25}0.338$\pm$0.011 \\
\bottomrule
\end{tabular}}}
\label{table:43}
\end{table*}

\begin{figure}[!t]
\centering
\subfigure[Uchida.]{
\begin{minipage}[htbp]{0.5\linewidth}
\centering
\includegraphics[width=4cm,height=2cm]{./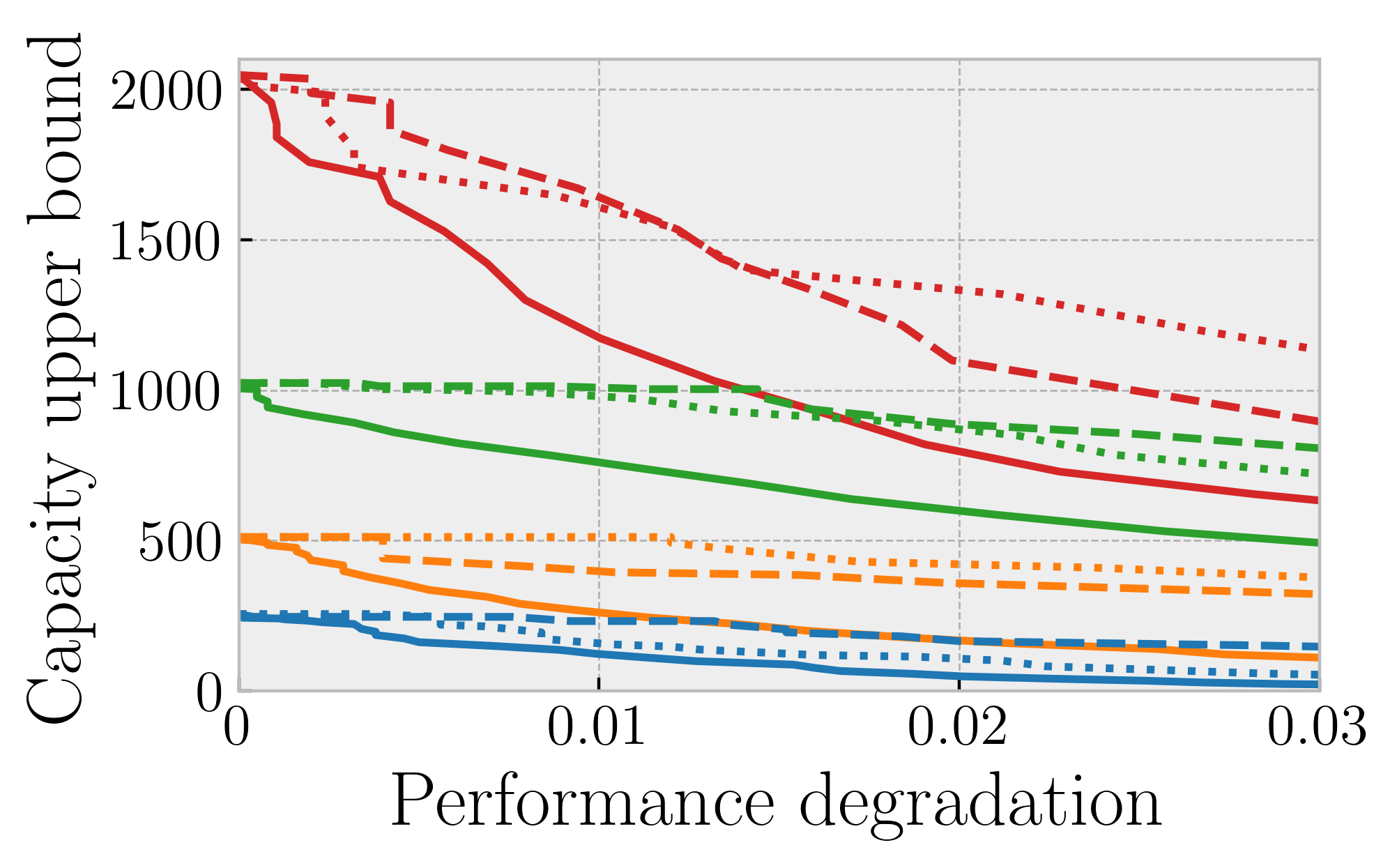}
\end{minipage}
}\subfigure[STDM.]{
\begin{minipage}[htbp]{0.5\linewidth}
\centering
\includegraphics[width=4cm,height=2cm]{./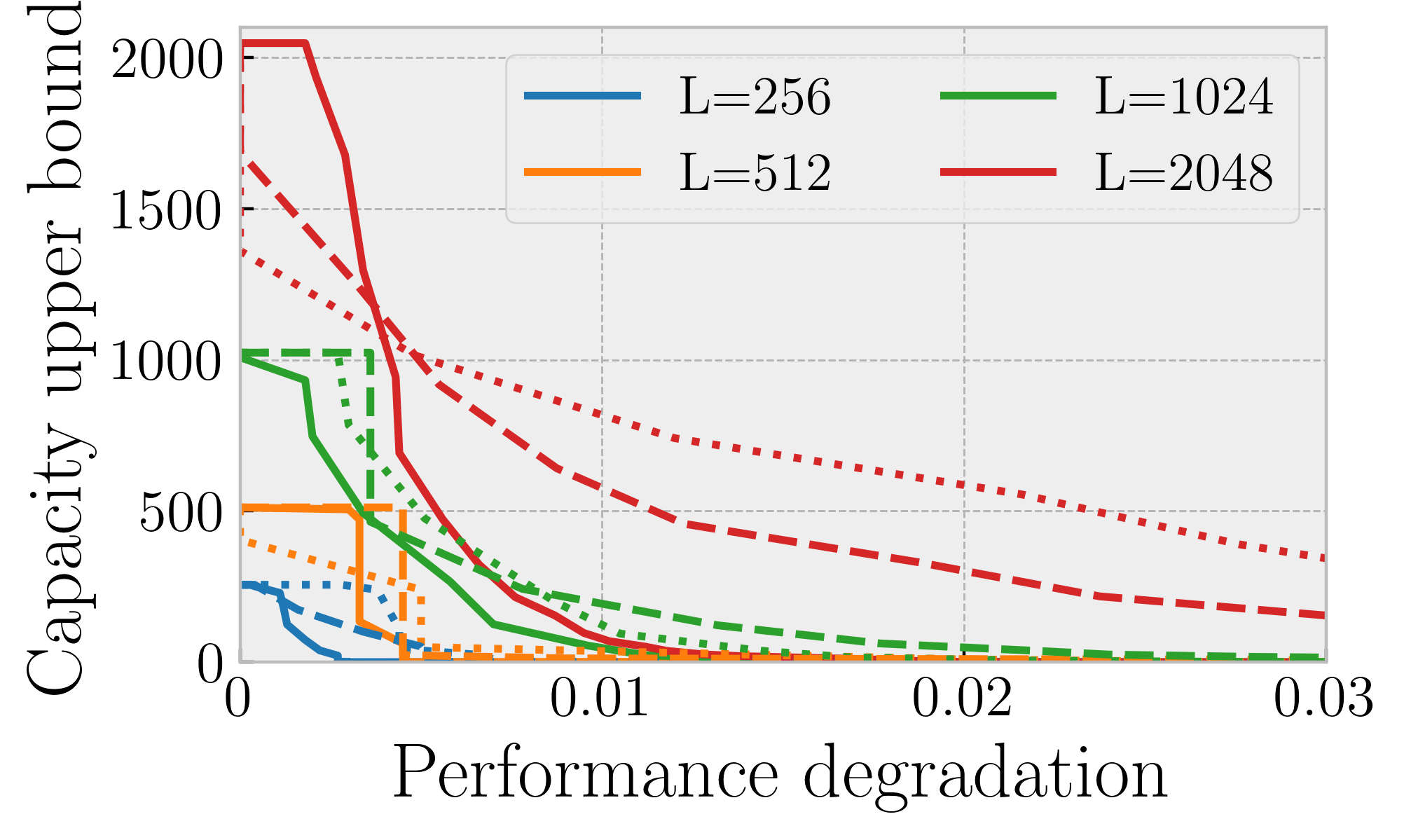}
\end{minipage}
}
\subfigure[MTLSign.]{
\begin{minipage}[htbp]{0.5\linewidth}
\centering
\includegraphics[width=4cm,height=2cm]{./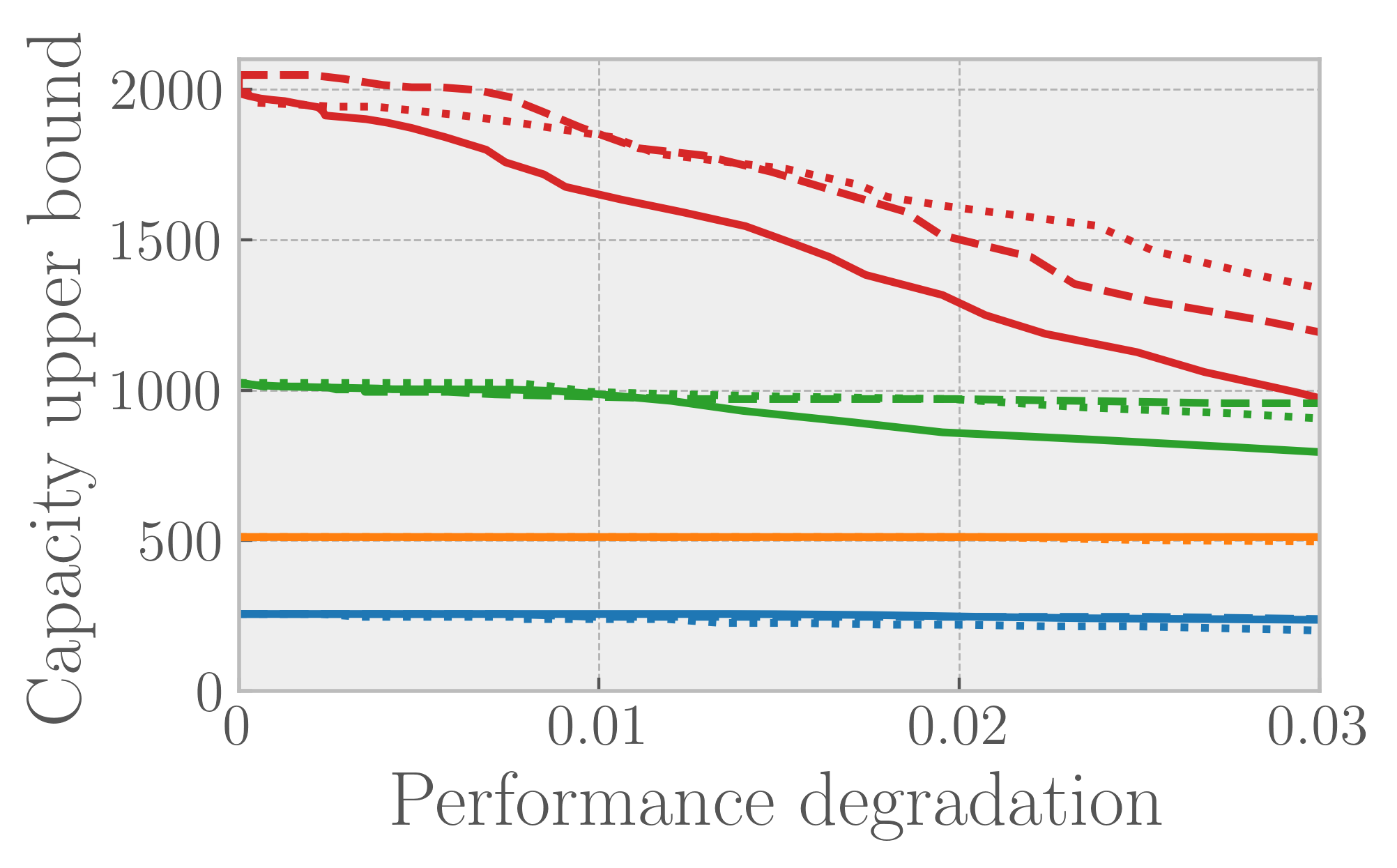}
\end{minipage}
}\subfigure[Content.]{
\begin{minipage}[htbp]{0.5\linewidth}
\centering
\includegraphics[width=4cm,height=2cm]{./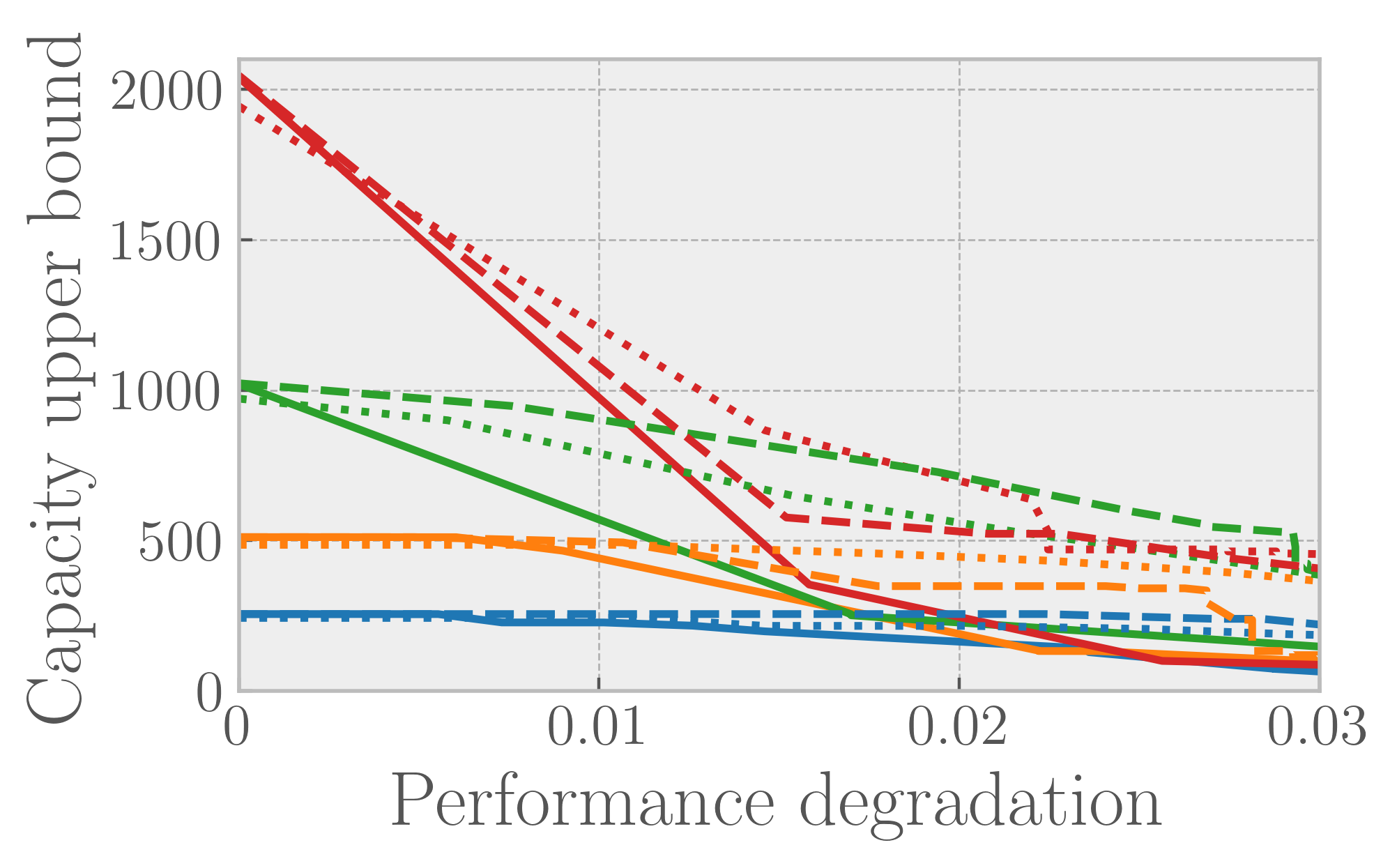}
\end{minipage}
}
\subfigure[Exponential.]{
\begin{minipage}[htbp]{0.5\linewidth}
\centering
\includegraphics[width=4cm,height=2cm]{./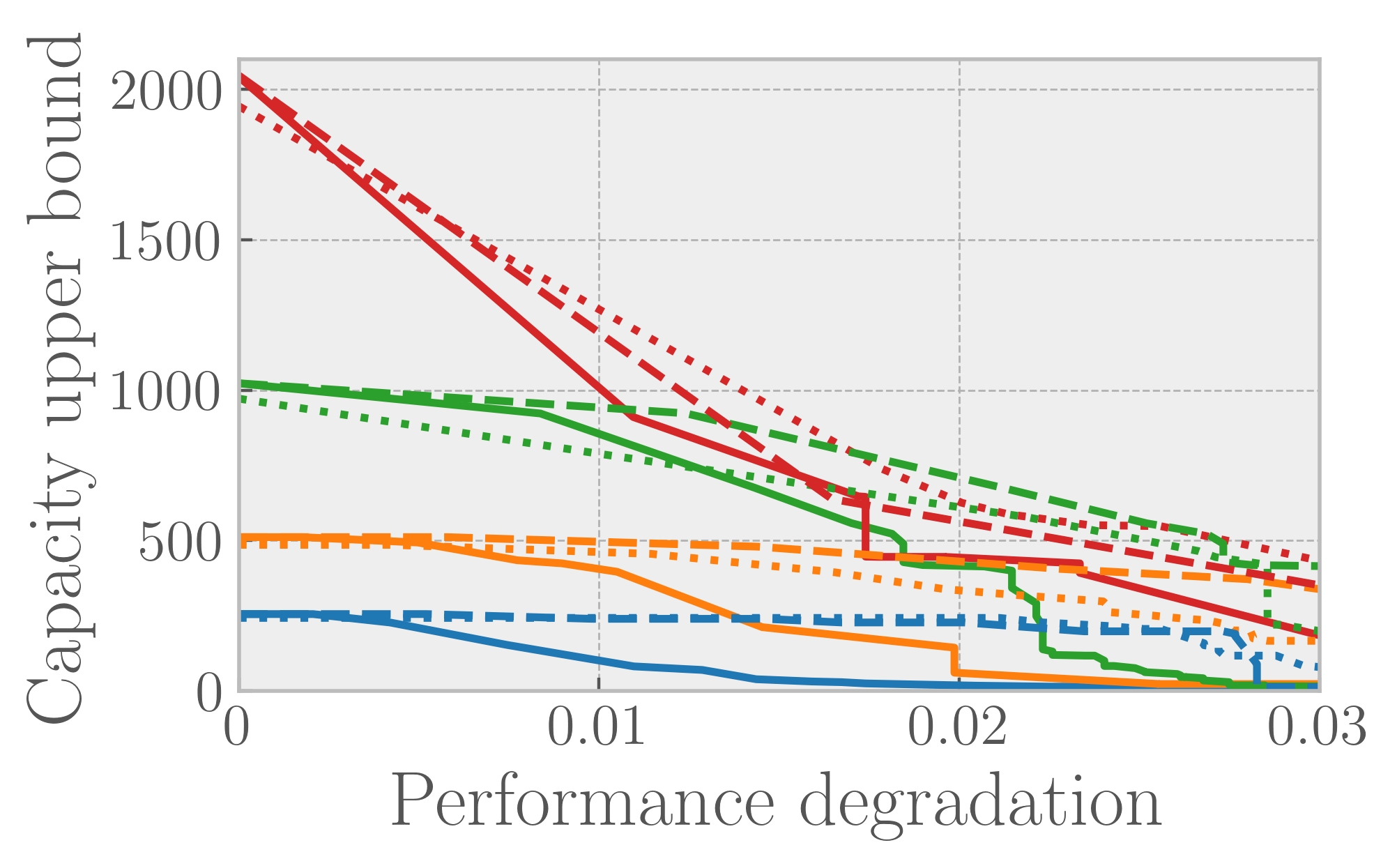}
\end{minipage}
}\subfigure[Frontier.]{
\begin{minipage}[htbp]{0.5\linewidth}
\centering
\includegraphics[width=4cm,height=2cm]{./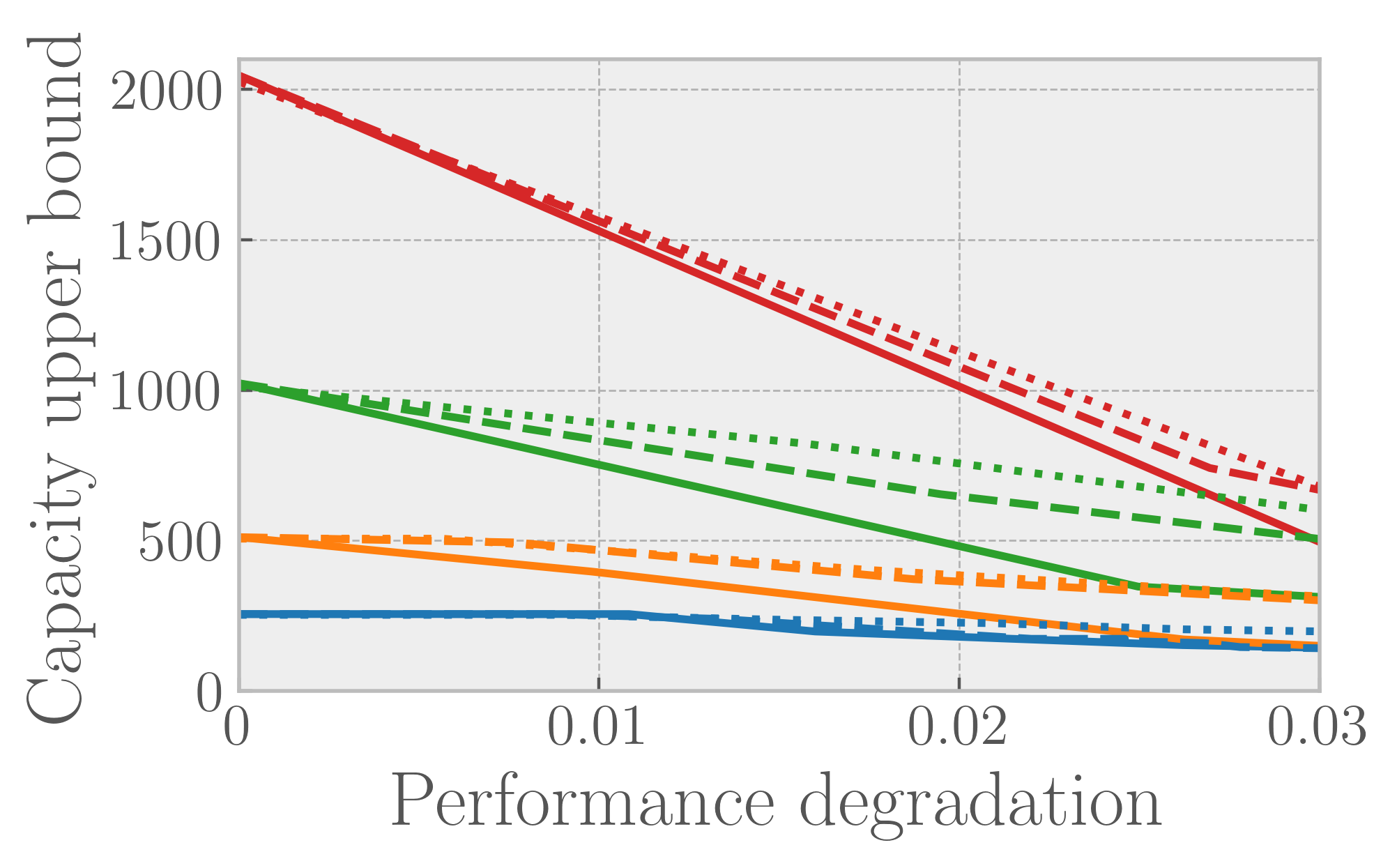}
\end{minipage}
}
\caption{Estimated capacity under adversarial overwriting under the baseline setting (\protect\tikz[baseline]{\protect\draw[line width=1pt](0,0.5ex)--(0.5,0.5ex);}), MROV-V-1(\protect\tikz[baseline]{\protect\draw[line width=1pt,dashed](0,0.52ex)--(0.5,0.52ex);}), and MROV-V-2 (\protect\tikz[baseline]{\protect\draw[line width=1pt,dotted](0,0.5ex)--(0.5,0.5ex);}). 
$L$ is set as {\color{RoyalBlue}256}, {\color{Orange}512}, {\color{LimeGreen}1024}, and {\color{Crimson}2048}.
}
\label{figure:43}
\end{figure}

\subsection{Efficacy of MROV-V}
We verified the efficacy of the proposed variational multiple rounds of ownership verification (we focused on MROV-V-1 and MROV-V-2 that are applicable in the black-box settings) for six watermarking schemes.

The distribution of perturbations on the parameters $\pi$ was set as a normal distribution as in~\cite{main:17}. 
The variational approximation distribution $\pi'$ for trigger-based schemes including \textbf{MTLSign}, \textbf{Content}, \textbf{Exponential}, and \textbf{Frontier} was generated by a decoder with four deconvolutional layers. 
For \textbf{Uchida} and \textbf{STDM}, the decoder was a four-layer MLP. 
Each autoencoder was trained on a collection of $Q\!=\!10000$ instances of parameters perturbations. 
The number of rounds was fixed as $R\!=\!100$. 
During watermark embedding of MROV-V-2 defined by Eq.~\eqref{equation:14}, $P\!=\!10$. 

Empirically, both MROV-V-1 and MROV-V-2 reduced the BER during ownership verification under adversarial modifications. 
To provide a fair and uniform comparison between MROV-V-1/2 and the basic one-time ownership verification, we transformed the BER statistics into capacity upper bounds using Theorem~\ref{theorem:2} and visualized them in Fig.~\ref{figure:43}. 
It is observed that MROV-V-1/2 increases the bound so more information can be safely transmitted through DNN watermark. 
This is because that the adversary has to modify the victim model until all neighbours of the ownership key fail to expose the correct identiy message (instead of only the ownership key as in one-time verification). 
MROV-V-2 has the largest capacity since the adversary has to exert larger modifications to suppress the effect of regularizer in Eq.~\eqref{equation:14}. 

However, MROV-V-2 also introduces a worse fidelity, i.e., although $\min_{\delta}\left\{\delta:C(\delta,L)\leq J \right\}$ decreases, $F(L)$ also increases so $\tilde{L}$ computed by Eq.~\eqref{equation:9} and $F(\tilde{L})$ are not guaranteed to decline. 
This fact is demonstrated by Fig.~\ref{figure:432}. 
Numerically, we computed the minimal length of the identity message and the corresponding performance degradation in Table~\ref{table:43}. 
We remark that both types of MROV-V reduce the minimal length of the identity message, but only MROV-V-1 can always reduce the minimal performance degradation due to watermarking since $F(L)$ is left invariable. 
In all settings of \textbf{MTLSign}, \textbf{Content}, \textbf{Exponential}, and \textbf{Frontier}, MROV-V-2 yields a higher cost because of the corrupted fidelity. 

As a result, we recommend the configuration of MROV-V-1 which is universally applicable, non-invasive, and is promised to increase the capacity's upper bound and hence reduce the expense of DNN copyright protection. 

\begin{figure}[!tbp]
\centering
\subfigure[MROV-V-1.]{
\begin{minipage}[htbp]{0.5\linewidth}
\centering
\includegraphics[width=4cm,height=3cm]{./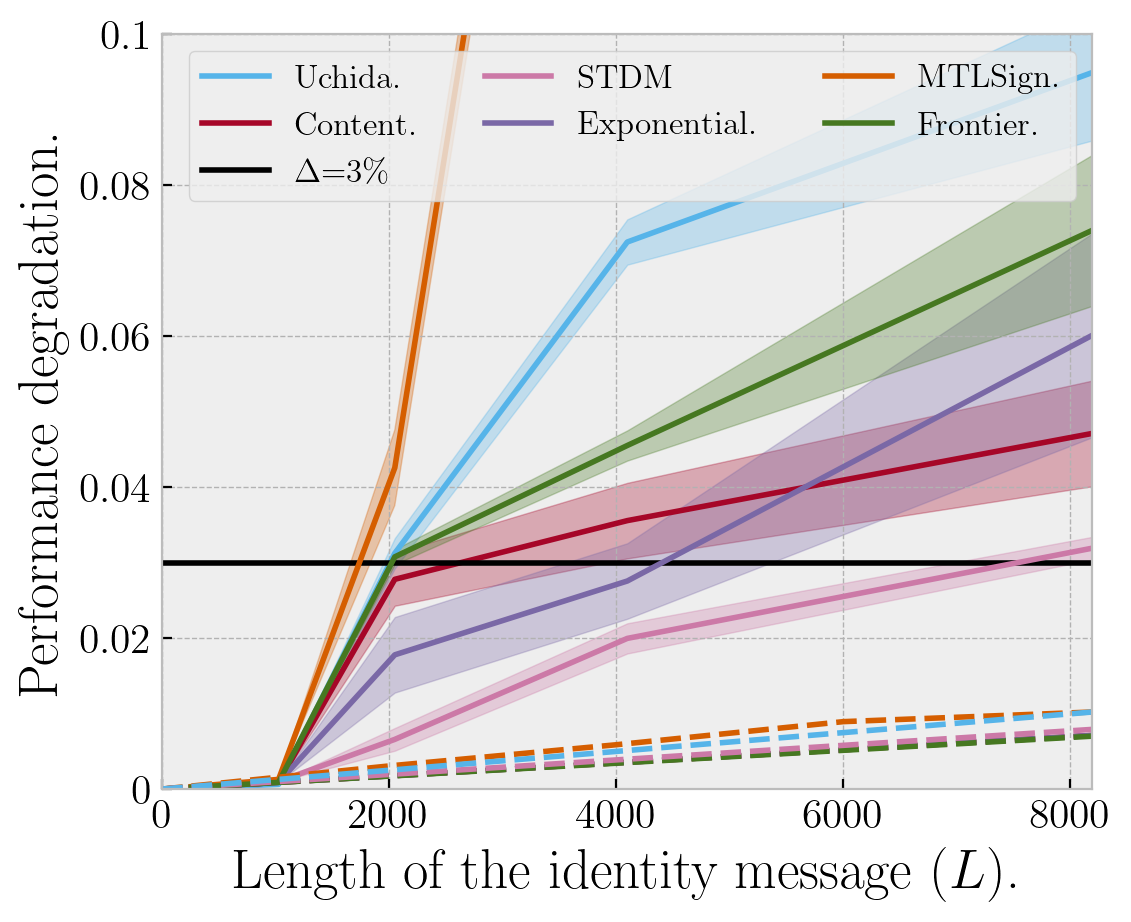}
\end{minipage}
}\subfigure[MROV-V-2.]{
\begin{minipage}[htbp]{0.5\linewidth}
\centering
\includegraphics[width=4cm,height=3cm]{./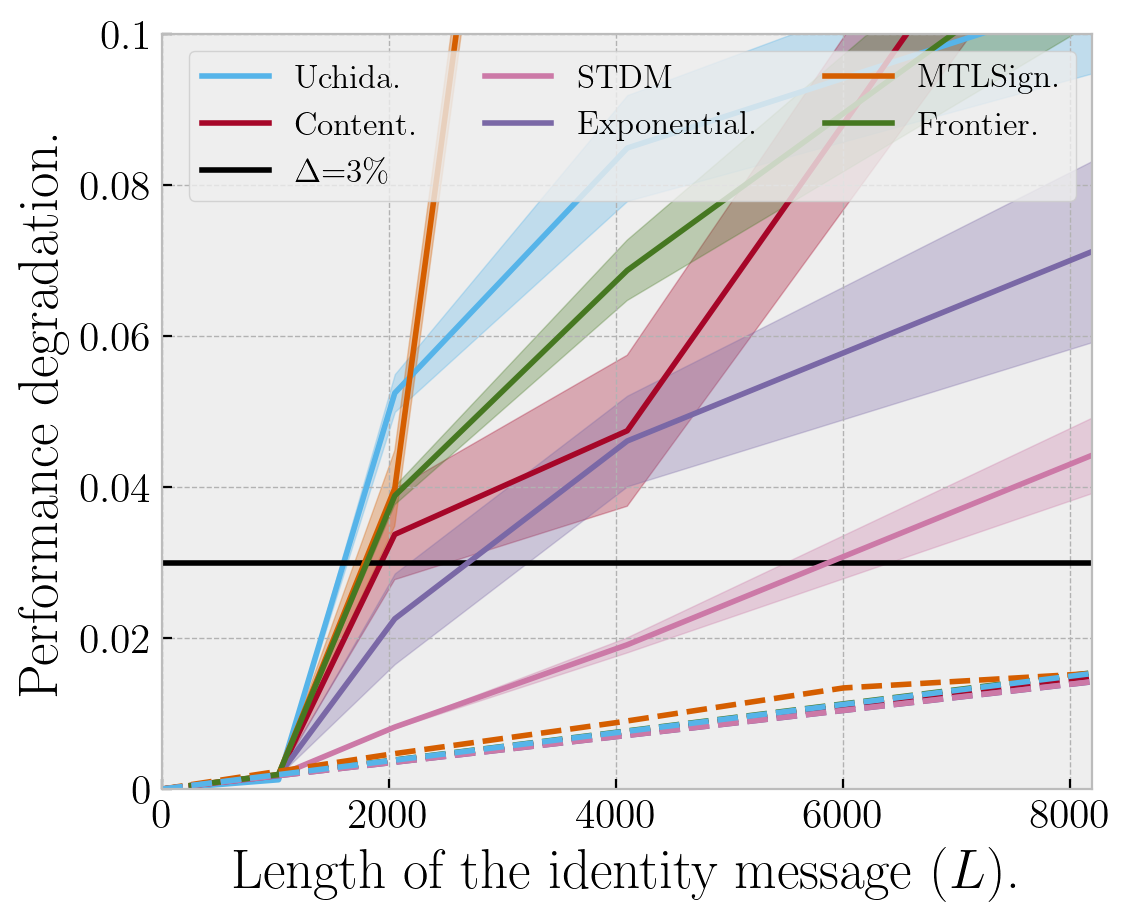}
\end{minipage}
}
\caption{Estimated minimal length of the identity message, $J\!=\!1024$. 
The dashed lines mark the cost in fidelity $F(L)$. 
The solid lines mark  $F(L)\!+\!\min_{\delta}$\{$\delta\!:\!\hat{C}(\delta,L)\!\leq\! J$\}.
}
\label{figure:432}
\end{figure}

\section{Conclusions}
\label{sec:5}
This paper studies the information capacity of DNN watermarks. 
We propose a unified definition and show how it is related to the accuracy, robustness, and the expense of ownership verification. 
Since its closed form is usually intractable, we design an estimation method to find an upper bound of the capacity by adversarial overwriting. 
Finally, we demonstrate that the capacity bottleneck can be broken by reducing the code rate with multiple rounds of ownership verification. 
We incorporate variational approximation on the ownership keys to expand the applicability of this method. 
Extensive experiments show that our scheme efficiently secure the ownership verification with no marginal performance degradation. 

\section*{Acknowledgments}
The authors would like to thank Zhuomeng Zhang and Hangwei Zhang for their assistance in experiments. 
The work described in this paper was supported by the National
Natural Science Foundation of China (62271307) and the Joint Funds of the National Natural Science Foundation of China (U21B2020). 

\bibliography{WM.bib}


\clearpage
\newpage
\appendix
\section{Case Study of Theorem 3 on Classifiers}
We demonstrate the application of Theorem 3 to classifiers with $B$ classes and $D$ training samples.
Assume that the watermarking scheme adopts the vanilla output interpreter, i.e., the prediction of each trigger is mapped into $\log_{2} B$ bits.
So the number of triggers is $\frac{L}{\log_{2} B}$.

Without loss of generality, we adopt the indistinguishable triggers in Exponential Weighting (Namba and Sakuma 2019), i.e., triggers and normal samples follow the same distribution. 
The following discussions adopt the assumption that the adversary has no knowledge of the triggers so it cannot discriminate triggers from normal queries. 
In this case, embedding watermark is equivalent to poisoning $\frac{L}{\log_{2} B}$ out of $D$ training samples, so the fidelity can be estimated by:
\begin{equation}
\label{equation:apfl}
F(L)=\frac{L}{D\cdot\log_{2} B}.
\end{equation}

On the other hand, the indistinguishability condition implies that declining the overall classification accuracy by $\delta$ results in declining the classification accuracy for triggers by $\delta$ (we assume that that adversary has no access to the ownership key for convenience).
For each misclassified trigger, $\frac{B}{B-1}\cdot\frac{\log_{2} B}{2}$ bits of information is lost. 
To prove so, let the binary representation of the original label be $\mathbf{y}$, the mistaken label after attack is $Y$. 
Then the average number of lost bits is:
\begin{equation}
\nonumber
\begin{aligned}
&\sum_{\hat{\mathbf{y}}}\text{Pr}\left(Y=\hat{\mathbf{y}}|\hat{\mathbf{y}}\neq\mathbf{y}\right)\cdot\|\hat{\mathbf{y}}\oplus\mathbf{y} \|_{0}\\
=&\sum_{\hat{\mathbf{y}}}\frac{\text{Pr}\left(Y=\hat{\mathbf{y}},\hat{\mathbf{y}}\neq\mathbf{y}\right)}{\text{Pr}(\hat{\mathbf{y}}\neq\mathbf{y})}\cdot\|\hat{\mathbf{y}}\oplus\mathbf{y} \|_{0}\\
=&\frac{B}{B-1}\cdot\sum_{\hat{\mathbf{y}}\neq\mathbf{y}}\text{Pr}\left(Y=\hat{\mathbf{y}},\hat{\mathbf{y}}\neq\mathbf{y}\right)\cdot\|\hat{\mathbf{y}}\oplus\mathbf{y} \|_{0}\\
=&\frac{B}{B-1}\cdot\left(\sum_{\hat{\mathbf{y}}}\text{Pr}\left(Y=\hat{\mathbf{y}}\right)\cdot\|\hat{\mathbf{y}}\oplus\mathbf{y} \|_{0}-\frac{1}{B}\cdot 0\right)\\
=&\frac{B}{B-1}\cdot\frac{\log_{2}B}{2}
\end{aligned}
\end{equation}
Therefore, the BER under this setting is:
\begin{equation}
\label{equation:apber}
\epsilon_{\delta}=\frac{\frac{L}{\log_{2} B}\cdot\delta\cdot\frac{B}{B-1}\cdot\frac{\log_{2} B}{2}}{L}=\frac{B}{B-1}\cdot\frac{\delta}{2}.
\end{equation}
To solve the equation $C(\delta,L)\!=\!J$, we combine Theorem 2 with Eq.~\eqref{equation:apber} and approximate $H(x)$ with $4x(1-x)$.
As a result, we have:
\begin{equation}
\label{equation:apdelta}
\min_{\delta}\left\{\delta\!:\! C(\delta,L)\leq J \right\}=\frac{B-1}{B}\cdot\left(1-\sqrt{\frac{J}{L}}\right).
\end{equation}

Plugging Eq.~\eqref{equation:apfl} and Eq.~\eqref{equation:apdelta} into Theorem 3, we conclude that $L$ satisfies:
\begin{equation}
\label{equation:apl}
\frac{L}{D\cdot\log_{2} B}+\frac{B-1}{B}\cdot\left(1-\sqrt{\frac{J}{L}}\right)=\Delta.
\end{equation}
The closed form solution to Eq.~\eqref{equation:apl} is:
\begin{equation}
\label{equation:apll}
\tilde{L}\!=\!\left(\sqrt[3]{-\!\frac{q}{2}\!+\!\sqrt{\left(\frac{q}{2}\right)^{2}\!+\!\left(\frac{p}{3}\right)^{3}}}\!+\!\sqrt[3]{-\!\frac{q}{2}\!-\!\sqrt{\left(\frac{q}{2}\right)^{2}\!+\!\left(\frac{p}{3}\right)^{3}}}\right)^{2},
\end{equation}
in which
\begin{equation}
\nonumber
\begin{aligned}
p&=D\cdot\log_{2} B\cdot \left(\frac{B-1}{B}-\Delta\right),\\
q&=-\frac{B-1}{B}\cdot D\cdot\log_{2} B\cdot\sqrt{J}.
\end{aligned}
\end{equation}

The variation of $\tilde{L}$ w.r.t. $J$ and $\Delta$ for $B\!=\!10$, $D\!=\!100000$ is demonstrated in Fig.~\ref{figure:ap1}.
\begin{figure}[!t]
\centering
\includegraphics[width=8.2cm]{./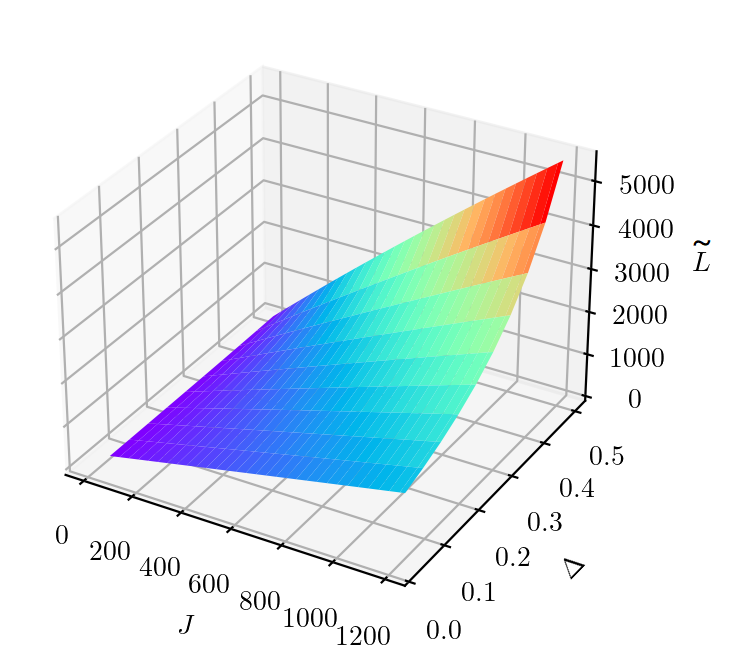}
\caption{The optimal identity message length $\tilde{L}$ w.r.t. the length of the source identity information and the performance degradation range.}
\label{figure:ap1}
\end{figure}

\newpage
\section{Capacity Upper Bound Estimation for Stronger Adversaries}
If the adversary is stronger than the assumption in Section 4, i.e., it possesses more data and can conduct better adversarial overwriting attack, then the capacity upper bound decreases and the expense of copyright protection increases. 

With more data to calibrate the victim model's, the performance degradation corresponding to a certain level of ownership corruption is expected to decline. 
As a result, the capacity upper bound curve would shifted leftward as illustrated in Fig.~\ref{figure:ap2} (a). 

\begin{figure}[htbp]
\centering
\subfigure[]{
\begin{minipage}[htbp]{0.5\linewidth}
\centering
\includegraphics[width=3.6cm,height=3.5cm]{./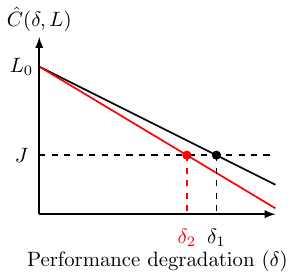}
\end{minipage}
}\subfigure[]{
\begin{minipage}[htbp]{0.5\linewidth}
\centering
\includegraphics[width=4cm,height=3.5cm]{./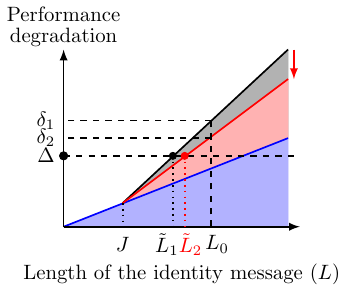}
\end{minipage}
}
\caption{
The results of assuming a stronger adversary. 
(a)~The expected upper bound estimation of the capacity. The red line marks the estimation for the stronger adversary.
(b)~The performance degradation v.s. the length of the identity message as Fig.~\ref{figure:2}, the red line marks the result for the stronger adversary.
}
\label{figure:ap2}
\end{figure}

This phenomenon was observed for all studied schemes as shown in Fig.~\ref{figure:ap4}.

\begin{figure}[!t]
\centering
\subfigure[Uchida.]{
\begin{minipage}[htbp]{0.5\linewidth}
\centering
\includegraphics[width=4cm,height=2cm]{./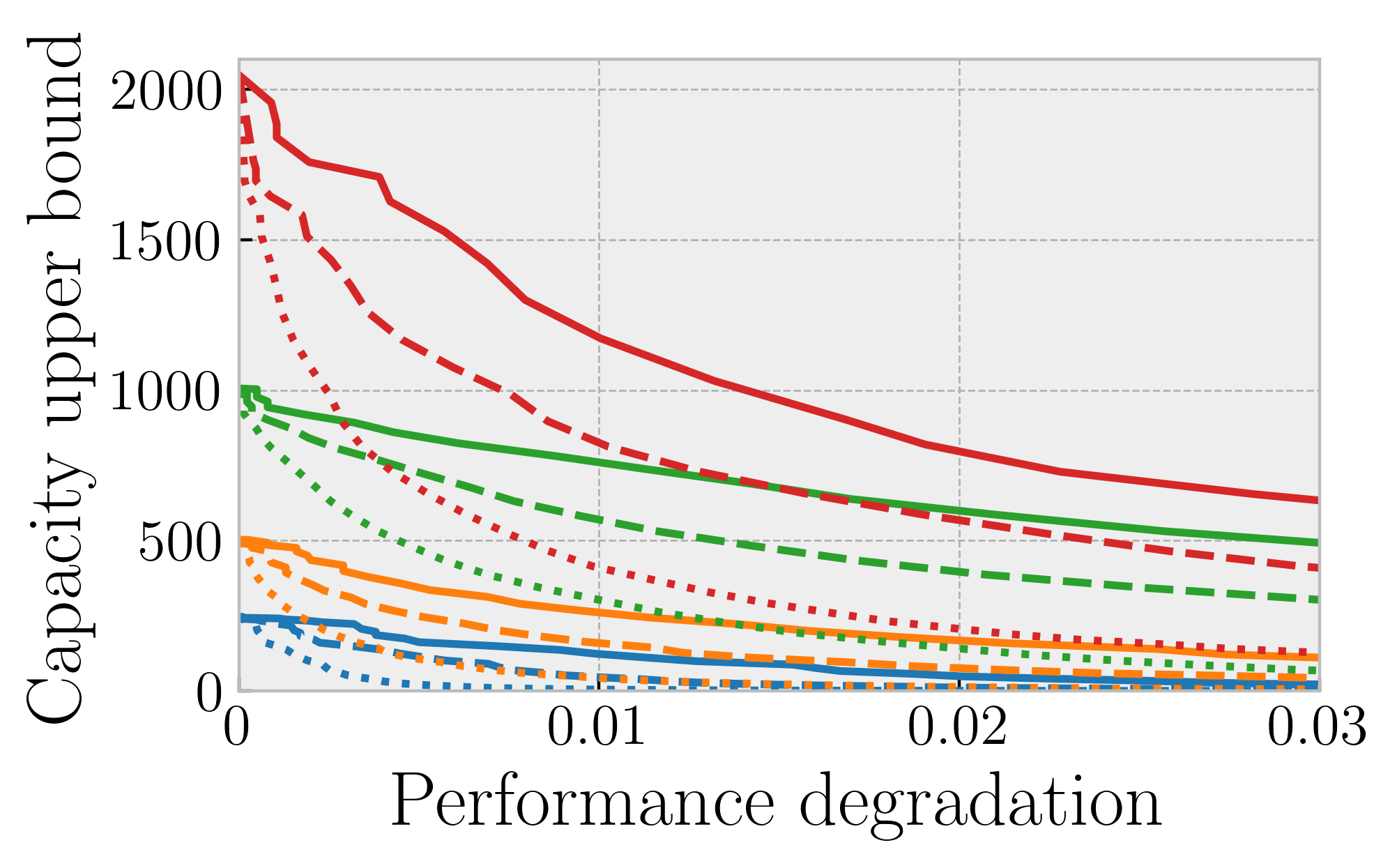}
\end{minipage}
}\subfigure[STDM.]{
\begin{minipage}[htbp]{0.5\linewidth}
\centering
\includegraphics[width=4cm,height=2cm]{./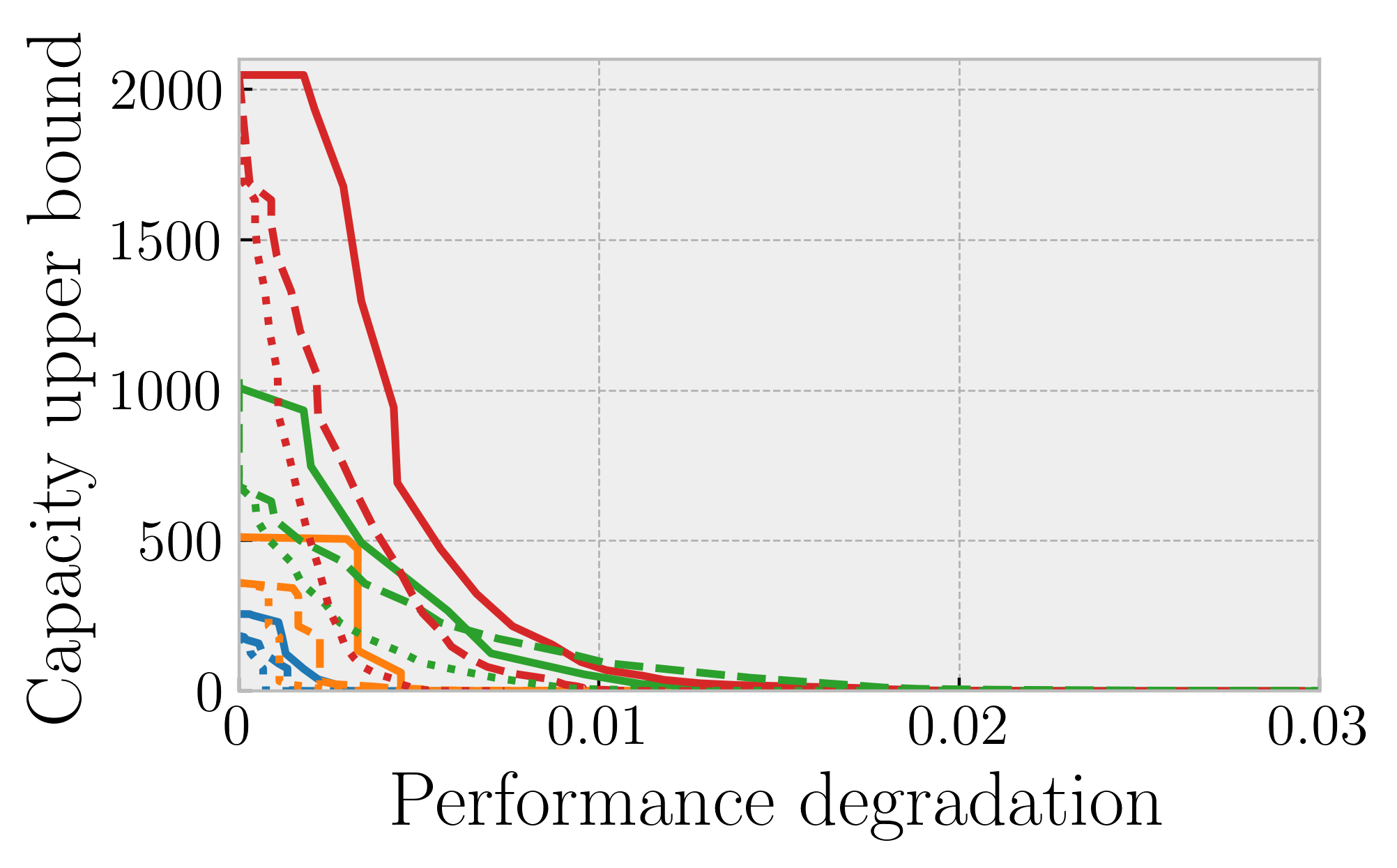}
\end{minipage}
}
\subfigure[MTLSign.]{
\begin{minipage}[htbp]{0.5\linewidth}
\centering
\includegraphics[width=4cm,height=2cm]{./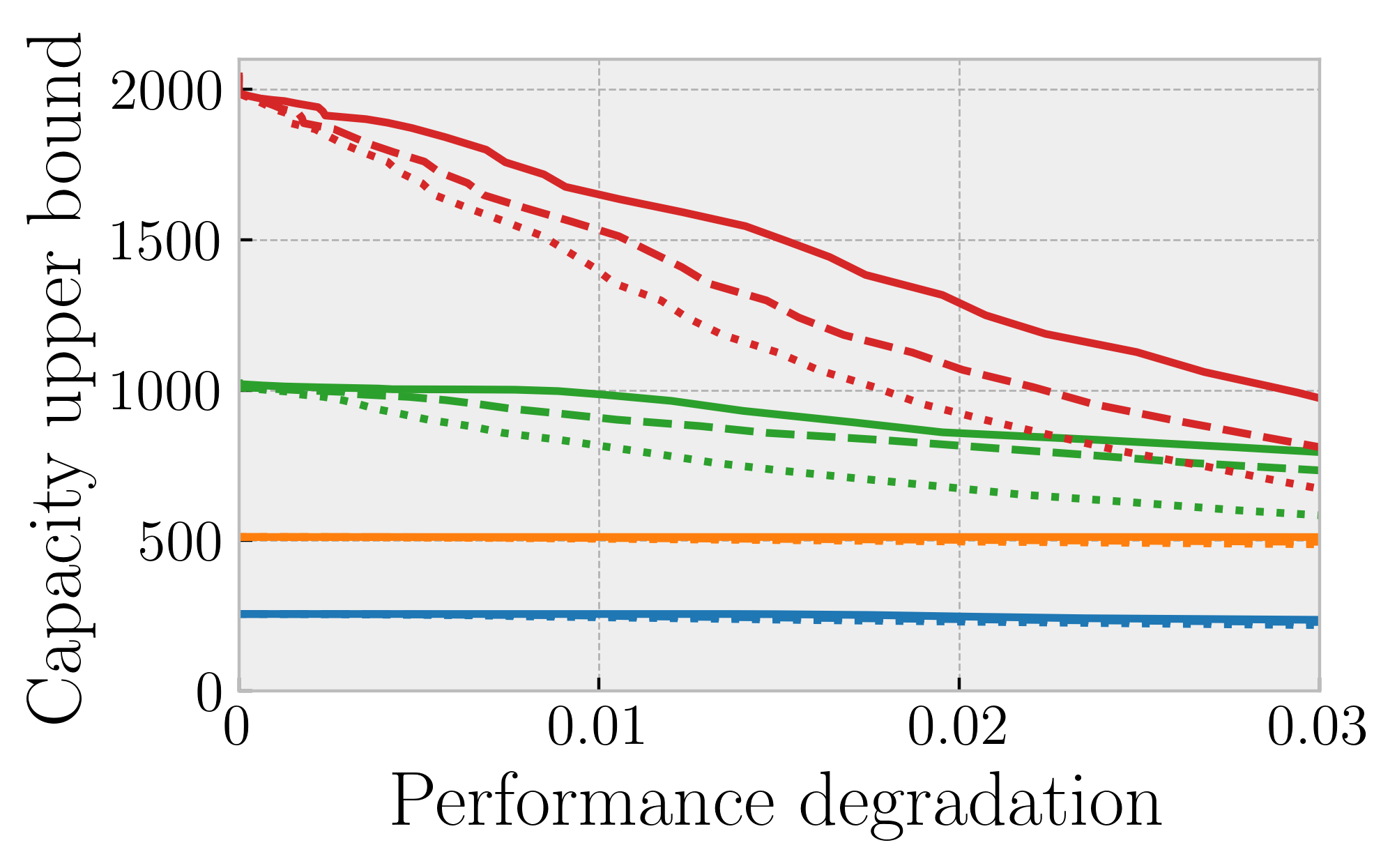}
\end{minipage}
}\subfigure[Content.]{
\begin{minipage}[htbp]{0.5\linewidth}
\centering
\includegraphics[width=4cm,height=2cm]{./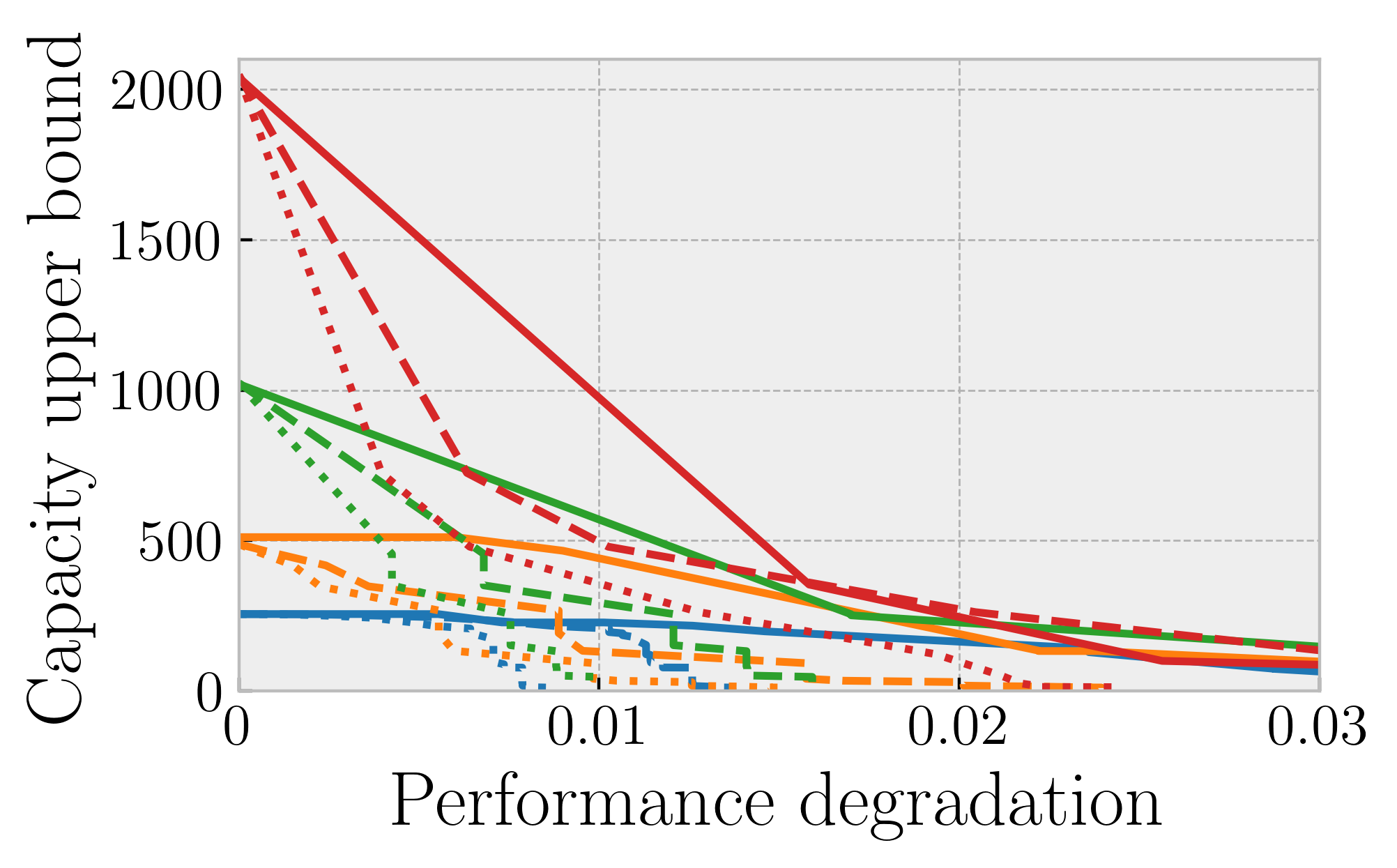}
\end{minipage}
}
\subfigure[Exponential.]{
\begin{minipage}[htbp]{0.5\linewidth}
\centering
\includegraphics[width=4cm,height=2cm]{./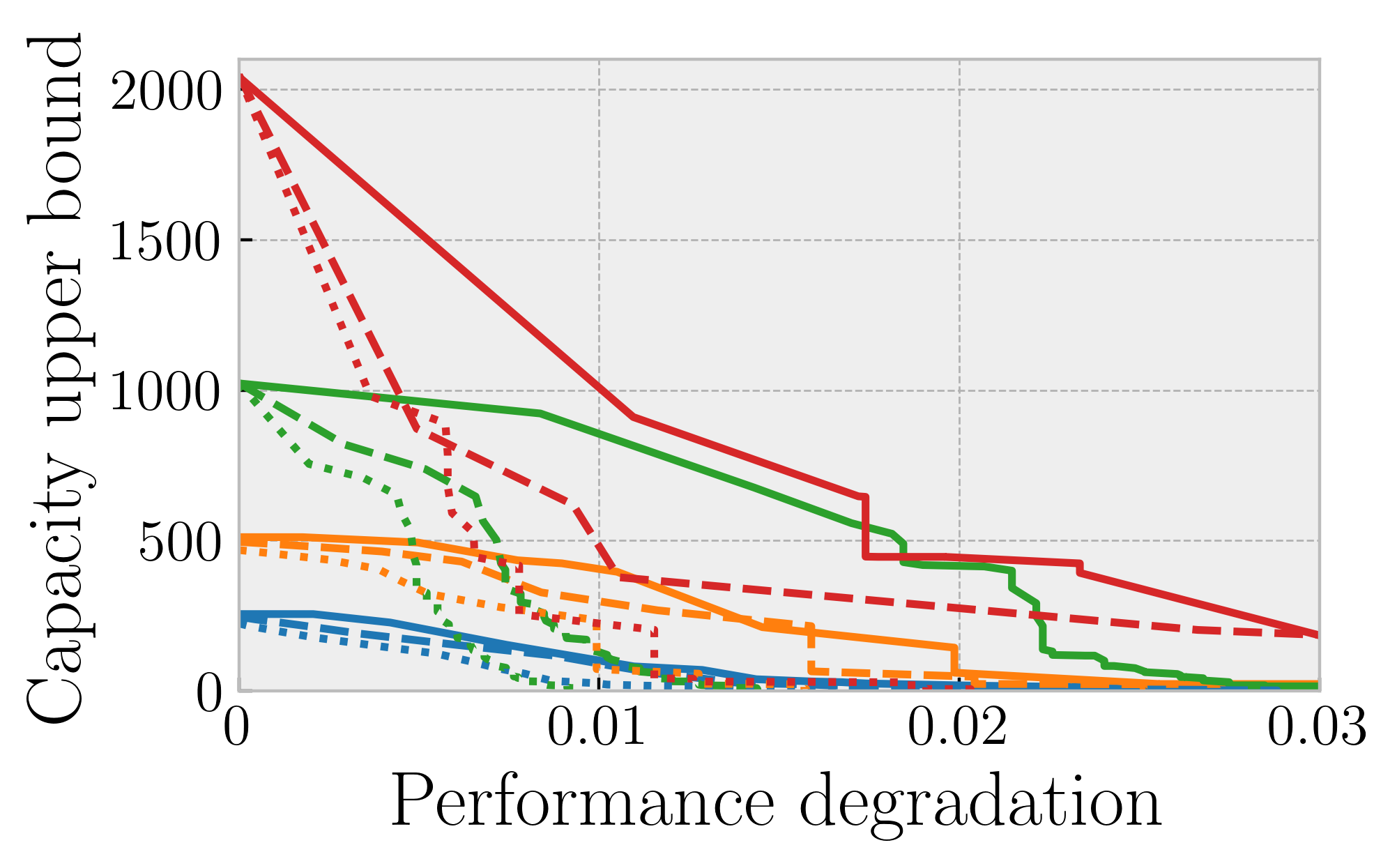}
\end{minipage}
}\subfigure[Frontier.]{
\begin{minipage}[htbp]{0.5\linewidth}
\centering
\includegraphics[width=4cm,height=2cm]{./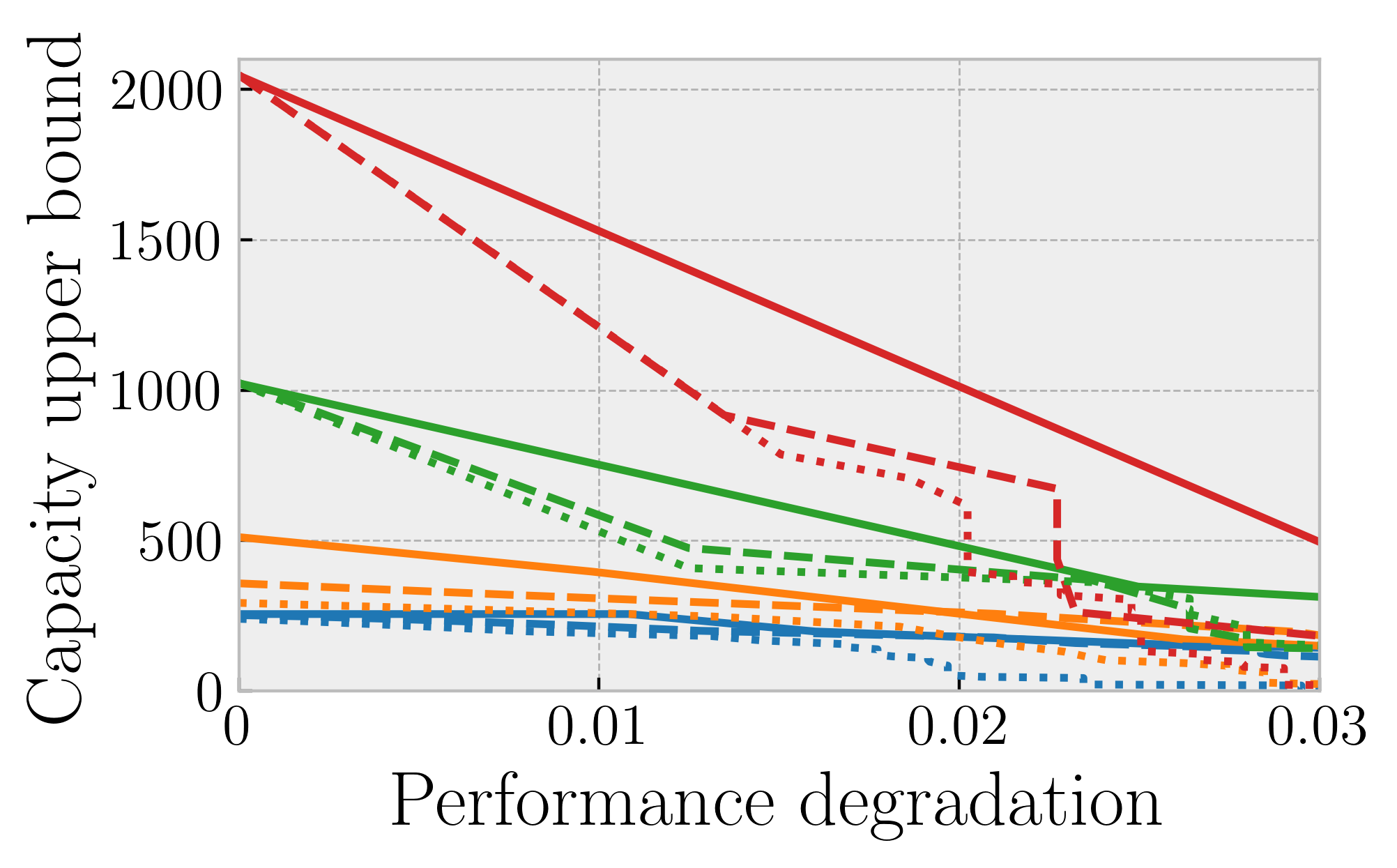}
\end{minipage}
}
\caption{Estimated capacity under adversarial overwriting where the adversary holds 10\% of the training dataset (\protect\tikz[baseline]{\protect\draw[line width=1pt](0,0.5ex)--(0.5,0.5ex);}), 20\% (\protect\tikz[baseline]{\protect\draw[line width=1pt,dashed](0,0.52ex)--(0.5,0.52ex);}), and 30\% (\protect\tikz[baseline]{\protect\draw[line width=1pt,dotted](0,0.5ex)--(0.5,0.5ex);}). 
$L$ is set as {\color{RoyalBlue}256}, {\color{Orange}512}, {\color{LimeGreen}1024}, and {\color{Crimson}2048}.}
\label{figure:ap4}
\end{figure}

Consequently, the value of $\min_{\delta}\left\{\delta:\hat{C}(\delta,L_{0})\leq J \right\}$ is going to decline from $\delta_{1}$ to $\delta_{2}$ for arbitrary $L_{0}$. 
The result of this declining is equivalently depicted in Fig.~\ref{figure:ap2} (b), where it finally results in an increase in the minimal length of the identity message for fixed $J$ and $\Delta$.

The estimated capacity relies on the concrete assumputions on the adversary's knowledge, e.g., the size of the dataset the adversary possesses, whether the adversary's dataset follows the identical distribution or is biased, etc. 
The owner has to determine the configuration of its hypothetical enemy before assessing the capacity of its watermark. 

\end{document}